\documentclass[twocolumn,showpacs,preprintnumbers,amsmath,amssymb]{revtex4-2}

\usepackage{dcolumn}
\usepackage{bm}
\usepackage{graphicx}
\usepackage{amsmath}
\usepackage{amsfonts}
\usepackage{amssymb}
\usepackage{amsthm}
\usepackage{amstext}
\usepackage{amsbsy}
\usepackage{amsopn}
\usepackage{amscd}
\usepackage{amsxtra}
\usepackage{hyperref}
\usepackage{empheq}
\usepackage{color}
\usepackage{soul}

\usepackage{enumitem} 
\usepackage{comment}

\newtheorem{theorem}{Theorem}
\newtheorem{lemma}{Lemma}

\def\qed{$\Box$}
\def\phi{\varphi}
\def\epsilon{\varepsilon}

\newcommand{\be}{\begin{equation}}  
\newcommand{\ee}{\end{equation}}
\newcommand{\ba}{\begin{array}}
\newcommand{\ea}{\end{array}}
\newcommand{\bea}{\begin{eqnarray}}
\newcommand{\eea}{\end{eqnarray}}
\newcommand{\bra}{\langle}
\newcommand{\ket}{\rangle}

\newcommand{\nn}{\nonumber}

\everymath{\displaystyle}

\begin{document}
\title{Entropy production versus memory effects in two-level open quantum systems} 
\author{G. Th\'eret, D. Sugny and C. L. Latune}
\email{Camille.Lombard-Latune@u-bourgogne.fr}
\affiliation{Université Bourgogne Europe, CNRS, Laboratoire Interdisciplinaire Carnot de Bourgogne ICB UMR 6303, 21000 Dijon, France}

\begin{abstract}
We compare several definitions of entropy production rate introduced in the literature from a large variety of situations and motivations, and then analyze their relations with memory effects. Considering a relevant experimental example of a qubit interacting with a single bosonic mode playing the role of a finite bath, we show that all definitions of entropy production coincide at weak coupling. In a stronger coupling regime,  significant discrepancies emerge between the different entropy production rates, although some similarities in the overall behaviour remain. However, surprisingly, two of these definitions—one based on local quantities of the system and the other on non-local quantities—coincide exactly, even in the case of strong coupling. Finally, a high degree of correspondence is observed when memory effects characterized by P-divisibility are compared with the sign of all entropy production rates in the case of weak coupling. Such correspondence degrades at stronger coupling, leading us to extend the concept of entropy production to the dynamical map. We show a perfect equivalence between the sign of this enlarged concept of entropy production and P-divisibility, both numerically and analytically, in the case of phase-covariant master equations.
\end{abstract}

\maketitle

\section{Introduction}
The concept of entropy production is central to both classical and quantum thermodynamics~\cite{Landi2021}. It is closely related to the irreversibility of the local system dynamics and to the information lost in the bath~\cite{Esposito2010,Landi2021,Santos2018}, and is also associated with the arrow of time~\cite{Parrondo2009,Deffner2011,Santos2019,Latune2020}.
From a practical viewpoint, the entropy production also determines and often limits our ability to perform operations and tasks such as cooling a system~\cite{Taranto2023}, extracting work~\cite{Uzdin2015}, erasing information~\cite{Vanvu2022, Rolandi2023}, measuring time~\cite{Milburn2020, Pearson2021}, and controlling fluctuations~\cite{Hasegawa2020}, to mention a few. 

More recently, several studies have suggested a close relation between entropy production and memory effects~\cite{Matola2012,Bhattacharya2017,Marcantoni2017,Popovic2018,Santos2018, Strasberg2019, Rivas2020Apr}, often refereed to as non-Markovian effects, in reference to classical dynamics. In Ref.~\cite{Strasberg2019}, the authors establish the conditions in classical systems for which a negative entropy production rate implies non-Markovianity, and raises the question of its quantum version. Indeed, one of the difficulties in the quantum case is the non-unique characterization of non-Markovianity, which stems from the inability to extend the classical notion straightforwardly to quantum systems~\cite{Rivas2014}.  
For open quantum dynamics, the intuitive relation between entropy production and memory effects can be traced back to a measure of memory effects such as the BLP criterion~\cite{Breuer2009}, which relies on the contractivity of the dynamical map. More precisely, when the map is contractive, i.e. when the trace distance between two arbitrary initial states decreases monotonically in time, there is a monotonic loss of information from the system to the bath, which is interpreted as the absence of memory effect. Otherwise, information flows back from the bath to the system. In Ref.~\cite{Matola2012}, the authors provide an upper bound on contractivity in terms of system-bath correlations.

On the other hand, many definitions of entropy production are based on the contractivity of the pseudo-distance (the relative entropy) between an arbitrary initial state and a specific state, such as the steady state (for weak coupling time-independent dissipator)~\cite{Spohn1978, Alicki1979, Breuer2002}, or the instantaneous fixed point (for strong coupling and time-dependent dissipator)~\cite{Colla2022}, or even the joined uncorrelated state of the system and the initial state of the bath~\cite{Esposito2010}.  
This contractivity is analogous to the BLP criterion, and therefore suggests, to explore the precise link between memory effects and entropy production, as studied in~\cite{Bhattacharya2017,Marcantoni2017,Popovic2018,Santos2018}. It can be shown that negative entropy production rates lead to memory effects, and more precisely to Non-Markovianity (defined as non CP-divisibility, see below for a description and~\cite{Breuer2016}). The reverse implication is not true, and potential alternative witnesses of memory effects are introduced such as the purity~\cite{Bhattacharya2017}, and the influence of the form of the instantaneous fixed point of the map~\cite{Marcantoni2017}.  
These previous works also raised the key issue of which definition of entropy production should be used. This problem was studied in~\cite{Colla2021Nov}, where two definitions of entropy production, the traditional one~\cite{Spohn1978, Alicki1979, Breuer2002} and the Esposito definition~\cite{Esposito2010}--tailored for strong coupling between system and bath-- are compared. Interestingly, the two quantities exhibit a very different behaviour, except in the case of very weak coupling and high temperature.

In this paper, we propose to pursue efforts in this direction using a simple but relevant model consisting of a qubit interacting with a single-bosonic-mode bath. Such a model combines strong coupling and finite bath, two characteristics that are currently receiving a lot of attention due to their complexity and close proximity with experimental situations. In the first part of the paper, we compare several definitions of entropy production, ranging from traditional to more recent ones that are specifically designed for strong coupling and finite baths. In a second part, we investigate the relation to memory effects.
The main result is that, at vanishing or small coupling, all definitions of entropy production coincide. However, in a stronger coupling regime, important discrepancies emerge, even though the global behaviour remains similar.  Furthermore, we observe that two definitions of entropy production exactly coincide. Finally, we show numerically the equivalence between P-divisibility and the sign of the entropy production rate of the dynamical map, which is an extension at the map level of the concept of entropy production. This equivalence is then proven mathematically for a large class of open quantum systems governed by a phase-covariant dynamics~\cite{Filippov,Theret2025}.



These results provide a physical interpretation of some of the suggested entropy production definitions and of the concept of P-divisibility. More generally, our study establishes a direct equivalence between the presence of memory effects and the sign of an entropy-production-related quantity. 

The paper is organized as follows. In Sec.~\ref{sec:entprod}, we provide the different definitions of entropy production used in this work. Our illustrative example is introduced in Sec.~\ref{sec:minimalist}. A systematic numerical comparison of entropy production rates is proposed in Sec.~\ref{sec:EPcomparison}. Section~\ref{sec:entropymemory} focuses on the link between entropy production and memory effects. The analytical proof of the equivalence between P-divisibility and the positivity of the map entropy production is presented in Sec.~\ref{sec:proof}. Conclusion and prospective views are given in Sec.~\ref{sec:conclusion}. Additional results are described in the Appendices.

\section{Entropy production}\label{sec:entprod}
In this section, we introduce the main characteristics of the different definitions of entropy production considered in our study. We start with the traditional definition~\cite{Spohn1978, Alicki1979, Breuer2002}, which was introduced for weak coupling between $A$, the quantum system of interest, and, $B$, the thermal bath at temperature $T_B$ (inverse temperature $\beta_B=1/k_BT_B$). According to such a traditional definition, the entropy production is
\be
\Sigma(t) := \Delta S_A - \beta_B Q(t),
\ee
where $\Delta S_A$ is the variation of von Neumann entropy $S_A(t) =S[\rho_A(t)] := -{\rm Tr}[\rho_A(t) \ln \rho_A(t)]$ of the system, and $Q(t)$ is the heat received by $A$, 
\be\label{eq:heatA}
Q(t) := \int_0 ^t du {\rm Tr}[\dot \rho_A(u) H_A(u)].
\ee
The rate of entropy production is given by
\be\label{eqentropy}
\sigma(t) = \dot S_A(t) - \beta_B \dot Q(t),
\ee
where $\dot S_A(t)$ is the time-derivative of von Neumann entropy of the system and $\dot Q(t)$ is the rate of heat exchanged with the bath.
Interestingly, definition~\eqref{eqentropy} of the entropy production rate $\sigma(t)$, inspired from its classical counterpart, coincides with  the formal definition from open quantum system theory~\cite{Spohn1978,Breuer2002},
\be\label{eq:traddef}
\sigma(t) = -\frac{d}{dt} D[\rho_A(t)|\rho_A^\text{ss}],
\ee
where $\rho_A^\text{ss}$ is the steady state of the open dynamics and $D[\sigma|\rho]:= {\rm Tr}[\sigma( \ln \sigma -\ln \rho)]$ is the relative entropy. To be more precise, the coincidence occurs when $\rho_A^\text{ss}$ is equal to the thermal state at the inverse bath temperature $\beta_B$, which is usually the case in the weak coupling regime. Additionally, we can mention that $\sigma(t)$ can also be expressed as the amount of information lost in the bath (still in the weak coupling regime),
\be
\sigma(t) = \dot I_{A:B}(t) + \frac{d}{dt}D[\rho_B(t)|\rho_B(0)],
\ee
where $I_{A:B}(t):= S_A(t) + S_B(t) - S_{AB}(t)$ is the mutual information between $A$ and $B$, so that $\dot I_{A:B}$ corresponds to the rate of creation of correlations between $A$ and $B$.

However, in the strong coupling regime, the definition~\eqref{eq:traddef} is expected to become invalid, partly because the instantaneous thermal state does not always coincide with the instantaneous fixed point (when it exists). More precisely, denoting the dissipator describing the strong coupling dynamics by ${\cal L}_t$, an instantaneous fixed point is defined as a state $\rho^\text{ss}(t)$ which cancels the dissipator at time $t$, meaning ${\cal L}_t \rho^\text{ss}(t) = 0$. By contrast, the instantaneous thermal state is $w[\tilde H_A(t),\beta_B] := e^{-\beta_B \tilde H_A(t)}/{\rm Tr}\left[e^{-\beta_B \tilde H_A(t)}\right]$, with $\beta_B$ the bath temperature and $\tilde H_A(t)$ the Hamiltonian of $A$ appearing in the unitary part of the master equation, see for instance Eq. \eqref{eq:MEingen}. Note that we introduce the following notation, $w[H,\beta]: = e^{-\beta H}/{\rm Tr}[e^{-\beta H}]$, which will be used throughout the paper to design the thermal state at inverse temperature $\beta$ associated with the Hamiltonian $H$. 
To go beyond these limitations,  many definitions of entropy production were suggested in the literature, depending of the context and the chosen viewpoint. We introduce some of them below. A summary is given in Tab.~\ref{tab:prices}.
\begin{itemize}
\item We start with one of the most accepted definition, valid in strong coupling thermal baths, \cite{Esposito2010}: 
\be\label{sigmaEs}
\sigma^{Es}(t) := \dot S_A(t) + \beta_B \dot E_B(t),
\ee
where $\dot E_B(t) = {\rm Tr}[\dot \rho_B(t) H_B]$ is the rate of the energy change of the bath, which is assumed to be initially in a thermal state at inverse temperature $\beta_B$. When the system and the bath are initially uncorrelated, the integral of the rate is guaranteed to be always positive, $\Delta S_A + \beta_B \Delta E_B \geq 0$. However, one limitation is that it only applies when the bath is initially in a thermal state. This implies that $\dot E_B(t)$ is considered to be the heat flow received by the bath. In the strong coupling regime, due to the non-negligible coupling energy, it differs from the heat $Q(t)$ received by $A$ defined in Eq.\eqref{eq:heatA}.

\item In order to extend to more general situations where the bath can be of arbitrary size and start in an arbitrary state, the following definition was introduced~\cite{Elouard2023}, 
\be
\sigma^{El}(t) := \dot S_A(t) + \beta_B^\text{eff}(0) \dot E_B^\text{th}(t),
\ee
where $\dot E_B^\text{th}(t) = {\rm Tr}\left\{H_B\frac{d}{dt}w[H_B,\beta_B^\text{eff}(t)]\right\}$ is the variation rate of the so-called thermal energy of the bath. In the above expression, the effective inverse temperature $\beta_B^\text{eff}(t)$ is defined for all times $t$ as the inverse temperature of the thermal state of the same entropy as $\rho_B(t)$, the state of $B$ at time $t$. In other words, $\beta_B^\text{eff}(t)$ is such that $S\{w[H_B,\beta_B^\text{eff}(t)]\} = S[\rho_B(t)] := -{\rm Tr}[\rho_B(t) \ln \rho_B(t)]$. Note that when $B$ is in a thermal state at temperature $T_B$, we have $\beta_B^\text{eff}(t) = (k_BT_B)^{-1}$.

\item Another framework has been also recently introduced which is valid for arbitrary coupling strength but only for $B$ initially in a thermal state (and uncorrelated from $A$) ~\cite{Colla2022}. However, one technical difficulty is that it is based on an exact master equation describing the local dynamics of $A$. Assume that we do have such a master equation, then it can always be cast into the form,
\be\label{eq:MEingen}
\dot \rho_A(t)  = - \frac{i}{\hbar} [\tilde H_A(t),\rho_A(t)] + {\cal D}_t \rho_A(t),
\ee
where $\tilde H_A(t)$ is in general a time-dependent Hamiltonian different from the free Hamiltonian $H_A$ of $A$, and ${\cal D}_t$ denotes the dissipator of minimal norm (see \cite{Colla2022}) resulting from the interaction with $B$, which is also time-dependent in general. Then, the entropy production is defined as ~\cite{Colla2022}
\bea\label{sigmaco}
&&\sigma^{Co}(t) :=\\
&& -{\rm Tr}\Big\{\dot \rho_A(t) \{\ln\rho_A(t) - \ln w[\tilde H_A(t),\beta_B]\}\Big\},\nn
\eea
where $\beta_B$ is the initial inverse temperature of $B$. Note that in general $w[\tilde H_A(t),\beta_B]$ is not the instantaneous fixed point of the dynamics, meaning it does not satisfy ${\cal D}_t w[\tilde H_A(t),\beta_B] = 0$. As already mentioned, this is one inherent difficulty of strong coupling and finite baths, i.e. the instantaneous thermal state and the instantaneous fixed point no longer coincide. This observation naturally leads to the last definition of entropy production we will consider in our study.

\item We substitute in Eq.~\eqref{sigmaco} the instantaneous thermal state by the instantaneous fixed point of the map:
\be\label{sigmafp}
\sigma^{fp}(t) := -{\rm Tr}\{\dot \rho_A(t) [\ln\rho_A(t) - \ln \rho_A^{fp}(t)]\},
\ee
with $\rho_A^{fp}(t)$ such that ${\cal L}_t \rho_A^{fp}(t) = 0$, where ${\cal L}_t$ is the generator of the open dynamics followed by $A$ (and which will be fully specified in Sec.~\ref{sec:minimalist}). The authors of Ref.~\cite{Colla2022} actually updated their definition \eqref{sigmaco} in a recent proposal \cite{Colla2024} where they choose a time-dependent inverse temperature $\beta_B^\text{fp}(t)$ of $B$ such that $w[H_A(t),\beta_B^\text{fp}(t)]$ is the instantaneous fixed point, recovering the above definition~\eqref{sigmafp} in the case of a quantum harmonic oscillator.

\item Finally, it has also been pointed out that the build up of correlations between system and bath is a central contribution to entropy production~\cite{Esposito2010, Santos2018, Landi2021}, and constitutes a minimal entropy production definition. The correlations between $A$ and $B$ are defined as 
\be
I_{A:B}(t) := S_A(t) + S_B(t) - S_{AB}(t).
\ee
$I_{A:B}(t)$ starts at zero since $A$ and $B$ are assumed to be initially uncorrelated, and then increases with time due to correlations. Since the global evolution of $AB$ is unitary, the rate of correlation build-up is
\be\label{eq:defcorrel}
\dot I_{A:B}(t) = \dot S_A(t) + \dot S_B(t). 
\ee
In the following, we include $\dot I_{A:B}(t)$ in the comparison with the other entropy production definitions.

\end{itemize}

\begin{table}[h]
\centering
\renewcommand{\arraystretch}{1.4}
\begin{tabular}{|c|c|}
\hline
Name & Expression \\
\hline
$\sigma^{Es}$ & $\dot S_A(t) + \beta_B \dot E_B(t) $\\
$\sigma^{El}$ & $\dot S_A(t) + \beta_B^\text{eff}(0) \dot E_B^\text{th}(t)$ \\
$\sigma^{Co}$ &
 $-{\rm Tr}\Big\{\dot \rho_A(t) \{\ln\rho_A(t) - \ln w[\tilde H_A(t),\beta_B]\}\Big\}$\\
$\sigma^{fp}$ & $-{\rm Tr}\{\dot \rho_A(t) [\ln\rho_A(t) - \ln \rho_A^{fp}(t)]\}$ \\
$\dot I_{A:B}$ & $\dot S_A(t) + \dot S_B(t)$\\
\hline
\end{tabular}
\caption{Summary of the different definitions of entropy production considered in the paper as well as the rate of correlations build-up defined in Eq.~\eqref{eq:defcorrel} (see the text for details).}
\label{tab:prices}
\end{table}

\section{Description of the model system}\label{sec:minimalist}
In this section, we present the setting in which we compare the above definitions of entropy production. 
%
%
 We choose a \emph{minimal bath} $B$ composed only of a single bosonic mode and interacting via the Jaynes-Cumming coupling with a two-level system $A$ of free Hamiltonian $H_A=\omega_A \sigma_+\sigma_-$. The total Hamiltonian is therefore
\be
H_{AB} = \omega_A \sigma_+\sigma_- + g (\sigma_+a + \sigma_- a^\dag) + \omega_B a^\dag a,
\ee
with  $\sigma_+ := |1\ket\bra 0|$, $\sigma_-:=|0\ket\bra 1|$, and $|1\ket$, $|0\ket$ denote respectively the excited and ground states of the Pauli matrix $\sigma_z$. We assume that the initial state of $B$ is a thermal state at inverse temperature $\beta_B$, namely $\rho_B(0) =w(H_B,\beta_B)$ and $H_{B} = \omega_B a^\dag a$. 
Note that the Jaynes-Cummings coupling is usually obtained from the Rabi model after the Rotating Wave Approximation (RWA). Therefore, the Jaynes-Cummings coupling is not expected to be valid in the strong coupling regime. However, a different gauge choice could extend its validity to this regime~\cite{Stokes2019}. Despite considering regimes of parameters that present strong coupling characteristics (see e.g. Fig.~\ref{fig:all}), the validity of the RWA is verified by comparing the dynamics under the Rabi model with the Jaynes-Cummings one (see Fig.~\ref{fig:RWA} of Appendix~\ref{app:addplots}). Note that the RWA is only important here for computing the entropy production $\sigma^{Co}$, which requires knowledge of the exact master equation describing the open dynamics. As shown in~\cite{Smirne2010}, this equation can be derived in the case of the Jaynes-Cummings Hamiltonian. 

 Then, under these  conditions, the reduced dynamics of $A$,
\be
\dot \rho_A = {\rm Tr}_B(\dot \rho_{AB}) = -\frac{i}{\hbar} {\rm Tr}_B([H_{AB},\rho_{AB}]),
\ee
can be expressed as an exact master equation~\cite{Smirne2010}:
\bea\label{eq:ME}
\dot \rho_A(t) = {\cal L}_t\rho_A &:=&  -\frac{i}{\hbar} [\Omega_A(t)\sigma_+\sigma_-,\rho_A(t)] \nn\\
&+& \gamma_1(t)\left[\sigma_+\rho_A(t)\sigma_- - \frac{1}{2}\{\sigma_-\sigma_+,\rho_A(t)\}\right]\nn\\
&+& \gamma_2(t)\left[\sigma_-\rho_A(t)\sigma_+ - \frac{1}{2}\{\sigma_+\sigma_-,\rho_A(t)\}\right]\nn\\
&+& \frac{1}{2}\gamma_3(t)\left[\sigma_z\rho_A(t)\sigma_z - \rho_A(t)\right],
\eea
with
\be
\Omega_A(t) = - \Im[\frac{\dot\gamma(t)}{\gamma(t)}],
\ee
\bea
\gamma(t) &=& \sum_{n=0}^\infty \frac{e^{- n \beta_B\omega_B}}{Z}e^{-i\omega_B t}\nn\\
&&\times\left[\cos(\Omega_n t/2) - i \frac{\Delta}{\Omega_n}\sin(\Omega_n t/2)\right]\nn\\
&&\times\left[\cos(\Omega_{n+1} t/2) - i \frac{\Delta}{\Omega_{n+1}}\sin(\Omega_{n+1} t/2)\right],\nn
\eea
\be
\Omega_n = \sqrt{\Delta^2 + 4 g^2n},\nonumber
\ee
\be\label{gamma1}
\gamma_1(t) = \frac{\alpha(t) \dot\beta(t) - \dot\alpha(t)\beta(t) - \dot\beta(t)}{\alpha(t)+\beta(t) -1},
\ee
\be\label{gamma2}
\gamma_2(t) = \frac{\dot\alpha(t)\beta(t) - \alpha(t)\dot\beta(t) - \dot\alpha(t)}{\alpha(t)+\beta(t) -1},
\ee
\be\label{eq:alpha}
\alpha(t) = \sum_{n=0}^\infty \frac{e^{- n \beta_B\omega_B}}{Z}\left[ \cos^2(\Omega_n t/2) + \frac{\Delta^2}{\Omega_n^2}\sin^2(\Omega_n t/2) \right],
\ee 
\bea\label{eq:beta}
\beta(t) &=& \sum_{n=0}^\infty \frac{e^{- n \beta_B\omega_B}}{Z}\nn\\
&&\times\left[ \cos^2(\Omega_{n+1} t/2) + \frac{\Delta^2}{\Omega_{n+1}^2}\sin^2(\Omega_{n+1} t/2) \right],\nn\\
\eea
and
\be
\gamma_3(t) =  -\frac{1}{2}\left[\gamma_1(t)+\gamma_2(t) + 2 \Re[\frac{\dot\gamma(t)}{\gamma(t)}]\right],\nonumber
\ee
with $\Delta=\omega_A-\omega_B$. Note that the quantity $\beta(t)$ defined above and the inverse temperature $\beta_B^\text{eff}(t)$ of $B$ are different quantities. 
 
%
%
%

\section{Entropy production comparison}\label{sec:EPcomparison}
We apply the above definitions of $\sigma^{Es}$, $\sigma^{El}$, $\sigma^{Co}$, $\sigma^{fp}$, and $\dot I_{A:B}$ to the open quantum system introduced in Sec.~\ref{sec:minimalist}. The initial states of $A$ and $B$ are thermal states at inverse temperatures $\beta_A$ and $\beta_B$, respectively.

\begin{figure}
\begin{center}
    \includegraphics[width=0.45\textwidth]{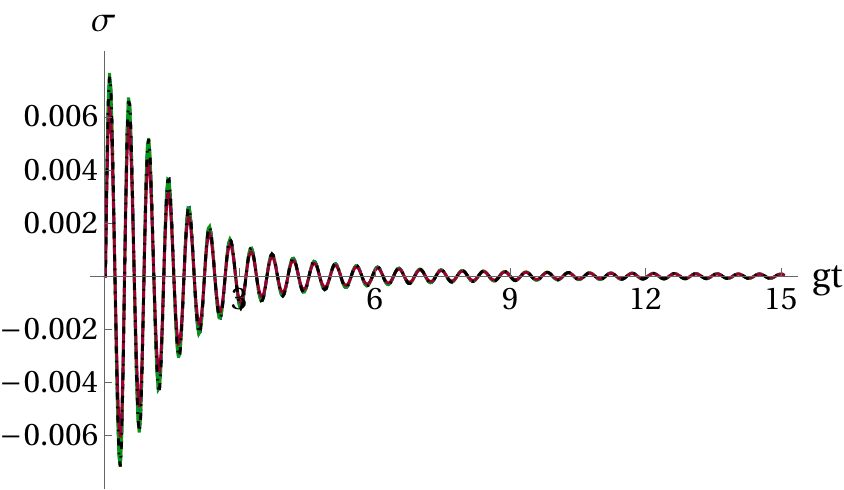}
\end{center}
\caption{Plots of the different definitions of entropy production $\sigma^{Es}$ (orange, thick), $\sigma^{El}$ (green), $\sigma^{Co}$ (purple),  $\sigma^{fp}$ (dashed blue), and $\dot I_{A:B}$ (dot-dashed black) in a weak coupling regime, with $\omega_B/\omega_A = 0.6$, $\Delta/\omega_A = 0.4$, $\omega_A\beta_A = 1.1$, $\omega_A\beta_B = 0.3$, and $g/\omega_A =0.03 $.}\label{fig:all_weak}
\end{figure}

In Fig.~\ref{fig:all_weak}, we present the time evolution of the different definitions of entropy production in a weak coupling situation. 
 Many important observations can be made. Firstly, all entropy production definitions coincide almost exactly (although some small discrepancies start to appear for $gt\gg 1$, not shown here). 
This is an interesting and rather unexpected result because of the large diversity of such definitions. This shows that they are essentially equivalent in the relatively weak coupling regime where $g/\omega_A =0.03 $, even in a finite bath configuration, which emphasizes their relevance.  
Note that some studies~\cite{Breuer1999, Breuer2012} consider that the strong coupling regime starts when the coupling strength is larger than the bath spectral width. In our present model, the bath spectral width is zero, implying that no matter how small is the coupling strength compared to $\omega_A$, it always corresponds to strong coupling according to this criterion.

A second important observation corresponds to the large initial peak in entropy production due to the initial out-of-equilibrium situation ($A$ is initially far from being in thermal equilibrium with $B$). Curiously, it is followed by rapid oscillations between negative and positive entropy production rates. The timescale of these oscillations is determined by $g$ and the detuning $\Delta$ (more precisely the largest relevant $\Omega_n$, determined by the associated Boltzmann weight $e^{-n\omega_B\beta_B}/Z$). Such oscillations are anti-correlated with the rate of the coupling energy $\dot E_\text{int}={\rm Tr}[\dot\rho_{AB}(t)V_{AB}]$, with $V_{AB} = g(\sigma_+ a+ \sigma_- a^\dag)$  (see Fig.~\ref{figapp:covsEint} in Appendix \ref{app:addplots}). Additional information and interpretation of these oscillations are provided in Appendix \ref{app:addplots}.  

A third observation is the decay of the oscillations, as if an equilibration process was occurring. The latter happens on a timescale determined by $\gamma_1(t)$, $\gamma_2(t)$, $\gamma_3(t)$, and therefore by the coupling strength $g$. However, there is a revival of the oscillation, for $gt \gg 1$, because the overall dynamics is quasi-periodic.

In Fig.~\ref{fig:all}, we represent the time-evolution of the different definitions of entropy production in a stronger coupling regime, with $g/\omega_A = 0.1$ and $\Delta/\omega_A = 0.15$, while the other parameters have the same values as in Fig.~\ref{fig:all_weak}. We verified numerically that the RWA yielding the Jaynes-Cummings model is still valid for these values of $g$ and $\Delta$. One can see in Fig.~\ref{fig:all} many important differences from weak coupling (a zoom of Fig.~\ref{fig:all} is provided in Fig.~\ref{figapp:zoom} of  Appendix~\ref{app:addplots}). Firstly, there is a major entropy production peak around $gt \simeq 0.5$, a value roughly 4 times larger than the initial peak at weak coupling. 
Secondly, there are some important discrepancies between the different definitions of entropy production, even though there are some similarities in the overall behavior: all entropy definitions, including the correlations $\dot I_{A:B}(t)$, present a large initial peak, followed by one oscillation and a transient close to zero, reminding of an equilibration phenomenon, before ending on oscillation revivals. Furthermore,  a closer inspection reveals that the definitions $\sigma^{Es}$ and $\sigma^{fp}$ are exactly the same. This coincidence is indeed exact, and is shown analytically in Appendix~\ref{app:sigmaid}. This is a very surprising result since $\sigma^{Es}$ is a definition involving the bath, as expected in strong coupling thermodynamics, while there is no explicit mention of the bath in the definition of $\sigma^{fp}$. However, it should be noted that such an identity between $\sigma^{Es}$ and $\sigma^{fp}$ is so far proved only for the Jaynes-Cummings model and when the mode $B$ is initially in a thermal state (this statement is not valid for arbitrary initial state of $B$, see Appendix~\ref{subsec:identity}). 

Noticing that $\sigma^{El}$ can be re-expressed as $\sigma^{El}(t) = \dot I_{A:B}(t) + \frac{d}{dt}D\{w[H_B,\beta^\text{eff}(t)]|\rho_B(0)\}$, the close proximity between $\sigma^{El}$ and $\dot I_{A:B}$ in Fig.~\ref{fig:all} reveals that the thermal background of $B$ remains at all times very close to the initial thermal state of $B$. Similarly, $\sigma^{Es}$ can be re-expressed as, $\sigma^{Es}(t) = \dot I_{A:B}(t) + \frac{d}{dt}D[\rho_B(t)|\rho_B(0)]$, implying that, contrary to its thermal background, $\rho_B(t)$ changes significantly, and even slightly oscillates, compensating the small oscillations of $\dot I_{A:B}(t)$. Interestingly, having significant changes in $\rho_B(t)$ while a thermal background $w[H_B, \beta^\text{eff}(t)]$ remaining almost constant suggests that $B$ is gaining only non-thermal energy \cite{Elouard2023}. Note that due to the time dependence of $\tilde H_A(t)$,  $\sigma^{Co}$ cannot be expressed in the form of $\dot I_{A:B}(t)$ plus a rate of relative entropy, which makes the physical interpretation of the origin of the discrepancies with the other entropy productions more challenging.

\begin{figure}
\begin{center}
    \includegraphics[width=0.45\textwidth]{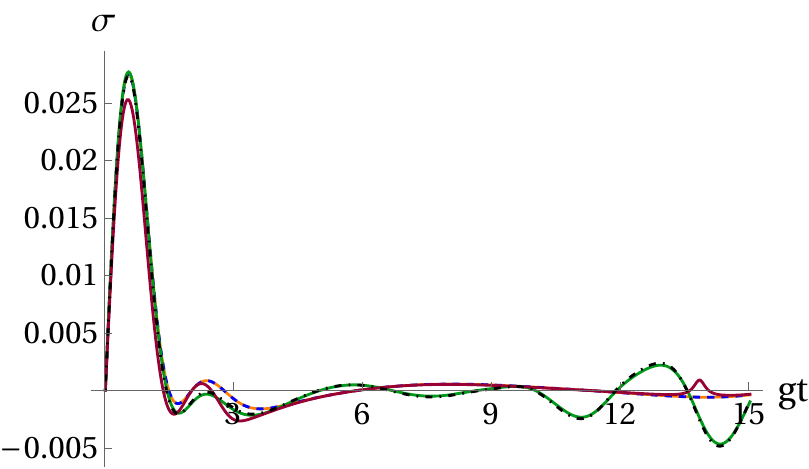}
\end{center}
\caption{Plots of the different definitions of entropy production $\sigma^{Es}$ (orange, thick), $\sigma^{El}$ (green), $\sigma^{Co}$ (purple),  $\sigma^{fp}$ (dashed blue), and $\dot I_{A:B}$ (dot-dashed black), in the strong coupling regime, with $\omega_B/\omega_A = 0.85$, $\Delta/\omega_A = 0.15$, $\omega_A\beta_A = 1.1$, $\omega_A\beta_B = 0.3$, and $g/\omega_A =0.1 $.}\label{fig:all}
\end{figure}

In addition to the above physical insights on entropy production in finite bath, our results show that at weak coupling, the comparison with memory effects can be made with any of the definitions of entropy production considered here. 
At strong coupling, there are two classes which are emerging from the above comparisons: the Esposito / fixed point entropy production, $\sigma^{Es}$ and $\sigma^{fp}$ (which coincide exactly), and the Elouard entropy production / correlations, $\sigma^{El}$ and $\dot I_{A:B}$, which are numerically close, but not exactly equal. Relations with memory effect are investigated in detail in Sec.~\ref{sec:entropymemory}.

\section{Entropy production and memory effects}\label{sec:entropymemory}
Memory effects in non-Markovian quantum systems can be characterized in many different ways, and constitute a vast field of research \cite{Rivas2014,Breuer2016, Vega2017}. The main criteria are CP-divisibility, P-divisibility, and the BLP criterion (see more details in Appendix~\ref{app:CPdivBLP} and~\cite{Rivas2014,Breuer2016, Vega2017, Theret2025}). 
For a two-level open system governed by a master equation of the form~\eqref{eq:ME}, P-divisibility can be characterized directly in terms of the  time-dependent damping rates $\gamma_1(t)$, $\gamma_2(t)$, and $\gamma_3(t)$, as follows: the dynamics is P-divisible at time $t$ if and only if~\cite{Filippov,Theret2025} 
\bea
& &|\gamma_-(t)| \leq \gamma_+(t)~\textrm{and}\nn\\
& &\text{if}~ 2\Gamma(t) \leq  \gamma_+(t), ~\text{then}~ \gamma_-(t)^2 \leq 4\Gamma(t)(\gamma_+(t) - \Gamma(t)),\nn\\  \label{Pcrit}
\eea
where $\gamma_\pm(t) := \gamma_1(t) \pm \gamma_2(t)$ and $\Gamma(t) = \gamma_3(t) +\gamma_2(t)/2$. 

\begin{figure}
\begin{center}
    (a)\includegraphics[width=0.45\textwidth]{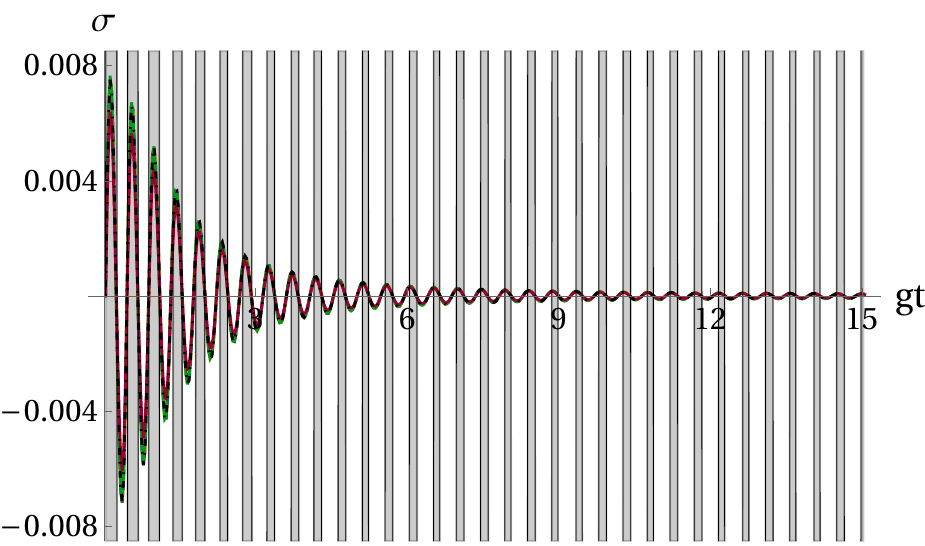}
\end{center}
\caption{Plots of the different definitions of entropy production $\sigma^{Es}$ (orange, thick), $\sigma^{El}$ (green), $\sigma^{Co}$ (purple),  $\sigma^{fp}$ (dashed blue), and $\dot I_{A:B}$ (dot-dashed black) in a weak coupling regime, with $\omega_B/\omega_A = 0.6$, $\Delta/\omega_A = 0.4$, $\omega_A\beta_A = 1.1$, $\omega_A\beta_B = 0.3$, and $g/\omega_A =0.03 $. The shaded grey area corresponds to the intervals of time where the dynamics is P-divisible (see the text for details).}\label{fig:Pdiv_weak}
\end{figure}

In Fig. \ref{fig:Pdiv_weak}, we use such a characterization of P-divisibility to compute the time intervals on which the dynamics is P-divisible, and compare it with the entropy production in the weak coupling regime. 
The correspondence between P-divisibility and positivity of entropy production is almost perfect. 

In Fig.~\ref{fig:Pdiv_minEP}(a), for  clarity, we focus on a shorter time interval and we plot only the entropy production $\sigma^{fp}$ (identical to $\sigma^{Es}$) for several initial states of $A$ (from pure states to maximally mixed states). One can see that the correspondence between P-divisibility is actually not exact and depends on the initial state. However, one can verify that whenever the dynamics is P-divisible, the entropy production for all plotted initial states is positive. This is actually a well-known consequence of the contractivity of the relative entropy under positive maps~\cite{Muller2017}: P-divisibility implies $\sigma^{fp} \geq0$  (see Appendix \ref{app:posep} for additional details). Still, the reverse is not true, as we can clearly see in Fig.~\ref{fig:Pdiv_minEP}(a).

\begin{figure}
\begin{center}
   (a) \includegraphics[width=0.45\textwidth]{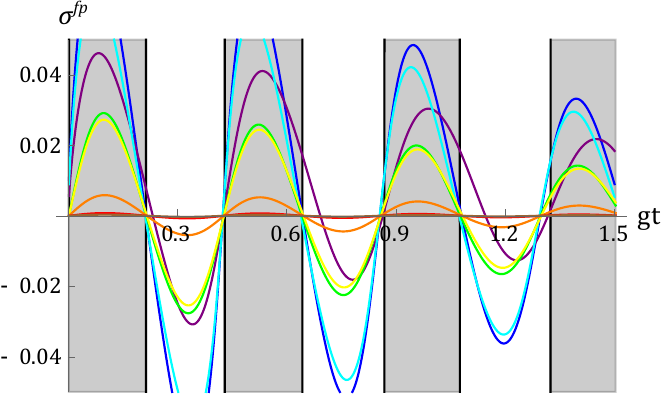}\\
    (b)\includegraphics[width=0.45\textwidth]{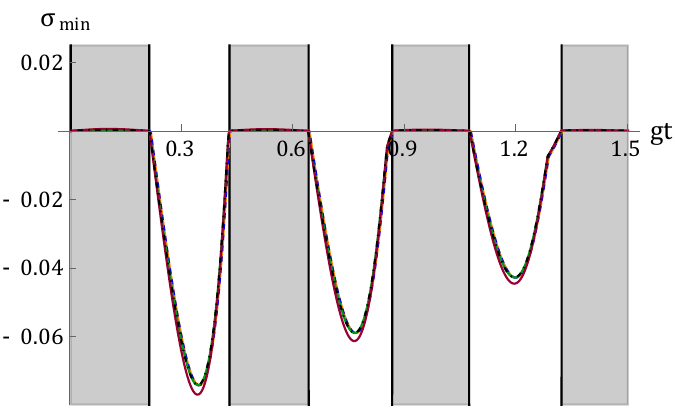}
\end{center}
\caption{(a) Plots of the entropy production $\sigma^{fp}$ for several initial states of $A$. (b) Plot of $\sigma^{fp}_\text{min}$ and of the minimum of all entropy production definitions (defined as in Eq. \eqref{sigmamin}). The chosen parameters correspond to the weak coupling regime, $\omega_B/\omega_A = 0.6$, $\Delta/\omega_A = 0.4$,  $\omega_A\beta_B = 0.3$, and $g/\omega_A =0.03 $. The shaded grey areas correspond to the intervals of time where the dynamics is P-divisible. }\label{fig:Pdiv_minEP}
\end{figure}

Nevertheless, it seems that whenever the dynamics is not P-divisible at time $t$, there exists at least one initial state such that the associated entropy production is indeed negative. This leads us to introduce the minimum of the entropy production over all initial states, 
\be\label{sigmamin}
\sigma_\text{min}^{fp}(t) := \min_{ \rho_A(0)} \sigma^{fp}(t).
\ee
One can indeed verify in Fig.~\ref{fig:Pdiv_minEP}(b) that there is a perfect correspondence between P-divisibility at time t and the positivity of the minimum of the entropy production $\sigma_\text{min}^{fp} $. Moreover, it appears also in Fig.~\ref{fig:Pdiv_minEP}(b) that such a correspondence holds for the minimum of all definitions of entropy production, defined in a similar way as $\sigma_\text{min}^{fp}(t)$ in Eq.~\eqref{sigmamin}. This correspondence  is expected because all entropy production definitions are equivalent at weak coupling (see Fig.~\ref{fig:all_weak}). Additionally, in Appendix~\ref{app:CPdivBLP}, we also provide the comparison with CP-divisibility and BLP criterion, revealing also a very close correspondence with the minimum of entropy production. This is because such criteria are almost equivalent in this case, as described also in~\cite{Theret2025}.

\begin{figure}
\begin{center}
   (a) \includegraphics[width=0.45\textwidth]{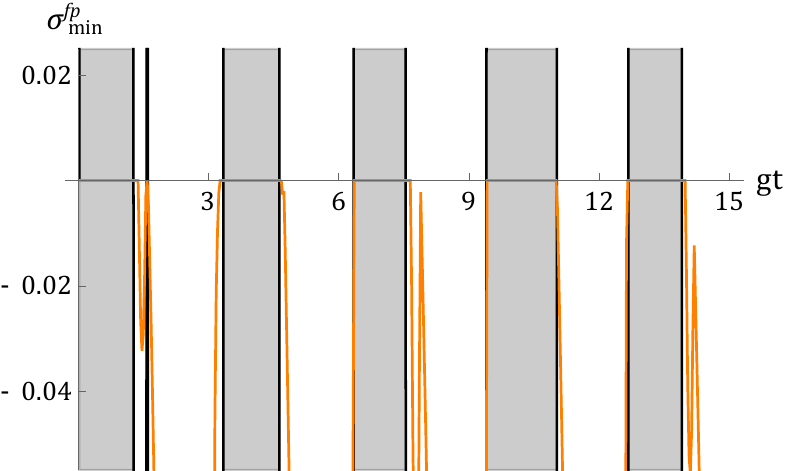}\\
    (b)\includegraphics[width=0.45\textwidth]{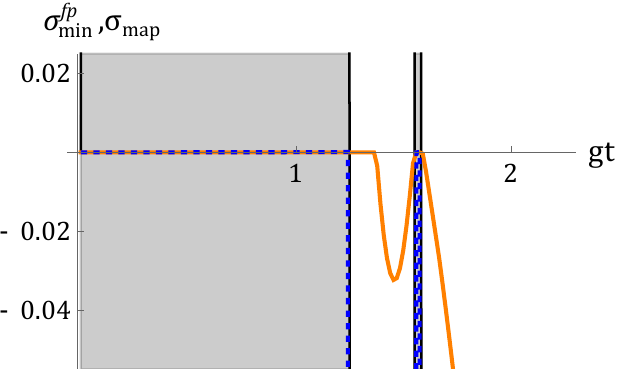}
\end{center}
\caption{(a) Plot of $\sigma^{fp}_\text{min}$, minimum of the entropy production over all initial states, in the strong coupling regime $g/\omega_A =0.3 $, near to resonance, $\omega_B/\omega_A = 0.99$, $\Delta/\omega_A = 0.01$, and cold bath $\omega_A\beta_B = 3$. (b) Zoom of plot (a) on a region where the equivalence between P-divisibility and the sign of $\sigma^{fp}_\text{min}$ fails. The blue dashed line corresponds to the plot of $\sigma_\text{map}$, the map entropy production defined in the main text in Eq.~\eqref{eq:sigmamap}. Note that the minimum over all states in the definition of $\sigma_\text{map}$ is obtained numerically by a discrete parameterization of the Bloch ball. However, for some instant of times, this minimum cannot be accessed numerically. Instead we perform an analytical analysis to obtain it. See Appendix \ref{app:findingsigmamap} for more details.}\label{fig:Pdiv_minEP_strong}
\end{figure}

However, such a correspondence between P-divisibility and the sign of $\sigma^{fp}_\text{min}(t)$ is lost for stronger coupling. This is illustrated in Fig.~\ref{fig:Pdiv_minEP_strong}, where there exist times (around $gt=1.35$, $gt=7.65$ and $gt=14$) for which the dynamics is not P-divisible, and still $\sigma^{fp}_\text{min}(t) > 0$. It means that the relation between P-divisibility and the entropy production $\sigma^{fp}_\text{min}$ is more subtle. Note that in Fig.~\ref{fig:Pdiv_minEP_strong}, the chosen coupling strength breaks the RWA, but this was made merely for illustrative purposes to present significant discrepancies between the sign of $\sigma^{fp}_\text{min}(t)$ and P-divisibility. 
From the P-divisibility criterion~\eqref{Pcrit} derived in~\cite{Theret2025}, it appears that P-divisibility is a property which depends only on the instantaneous generator ${\cal L}_t$, and not on the integrated dynamics from $0$ to $t$. However, $\sigma^{fp}_\text{min}(t)$ does depend on the integrated dynamics from $0$ to $t$, that we denote $\Lambda_{t,0}$,
\bea
&&\sigma^{fp}_\text{min}(t)= \nn\\
&& \min_{ \rho_A(0)}  \Big\{-  {\rm Tr} \left\{{\cal L}_t \Lambda_{t,0}\rho_A(0)\left[ \ln \Lambda_{t,0} \rho_A(0) - \ln \rho_A^{fp}(t)\right] \right\}\Big\},\nn
\eea
 This observation suggests to introduce the \emph{map entropy production}, as an extension of the entropy production to the map, as 
\bea\label{eq:sigmamap}
\sigma_\text{map}(t) &:=& \min_{ \rho_A}  \Big\{-  {\rm Tr} \left\{{\cal L}_t \rho_A\left[ \ln \rho_A - \ln \rho_A^{fp}(t)\right] \right\}\Big\} \nn\\
&&\leq \sigma^{fp}_\text{min}(t).
\eea
Note that $\sigma_\text{map}$ depends only on the instantaneous generator ${\cal L}_t$ and is a natural lower bound of $\sigma_\text{min}^{fp} $.
We see in Fig.~\ref{fig:Pdiv_minEP_strong}(b) a perfect correspondence between the sign of the map entropy production $\sigma^{fp}_\text{map}$ and P-divisibility. To confirm this numerical result, we demonstrate analytically such equivalence in the case of phase-covariant dynamics, and present the main points of the proof in Sec.~\ref{sec:proof}. 
This highlights an important link between, first, irreversibility and lack of memory effect, and secondly, between negative entropy production rate and memory effects.

\section{Equivalence between P-divisibility and non-negativity of $\sigma_\text{map}$}\label{sec:proof}

Based on the numerical results performed in the case of a qubit interacting with a single bosonic mode, the goal of this section is to prove rigorously the equivalence between P-divisibility and non-negativity of $\sigma_\text{map}$. However, in the proof, we consider not only this example, but also a more general two-level open quantum system with phase-covariant dynamics~\cite{Filippov,Theret2025}, where we neglect all the nonsecular terms of the master equation. In other words, it means that we consider a master equation of the form~\eqref{eq:ME} with arbitrary time-dependent coefficients $\Omega_A(t)$ and  $\gamma_i(t)$, $i=1,2, 3$. For convenience, we introduce the notations $\gamma_\pm(t) := \gamma_1(t) \pm \gamma_2(t)$, and $\Gamma(t) := \gamma_3(t) + \gamma_+(t)/2$.

Let $(x(t),y(t),z(t))=(\textrm{Tr}[\rho_A\sigma_x],\textrm{Tr}[\rho_A\sigma_y],\textrm{Tr}[\rho_A\sigma_z])$ be the Bloch coordinates of the density operator $\rho_A(t)$ of the system. We introduce the modulus $r$ of the Bloch vector with $r^2(t) = x^2(t) + y^2(t) + z^2(t)$. In a specific rotating frame, the dynamics of the Bloch coordinates are governed by the following set of differential equations~\cite{Theret2025,Theret2023}
\begin{eqnarray*}
& &\dot{x}=-\Gamma(t) x, \\
& &\dot{y}=-\Gamma(t) y,\\
& &\dot{z}=\gamma_-(t)-\gamma_+(t)z.
\end{eqnarray*}
Starting from Eq.~\eqref{sigmafp}, straightforward computations show that, for all $t$ such that $r(t)\neq1$,
\begin{equation}
\sigma^{fp}(t) = -\dot r(t)L(r(t)) + \dot z(t)L(\gamma_\infty(t)),
\end{equation}
where $L(x) = \textrm{arctanh}(x) =  \frac{1}{2}\ln\!\left(\frac{1 + x}{1 - x}\right)$ and $\gamma_\infty(t) = \frac{\gamma_-(t)}{\gamma_+(t)}$.
Note that the entropy might be infinite for pure states $r=1$. Infinite values can be avoided when $\dot r_{|r = 1} = 0$.

From now on we consider only the case where $\gamma_\infty(t)$ is constant, which is assumed to be positive, 
\begin{equation}
\gamma_\infty(t) = z_\infty \geq 0.
\end{equation}
The case $\gamma_\infty(t)$ a negative constant yields the same results. We then get, 
\begin{equation}
\label{eq:1}
\sigma^{fp} = \Gamma \ell^2 f(r) + \gamma_+(z_\infty - z)\left(z_\infty f(z_\infty) - zf(r)\right),
\end{equation}
where $f(r) \equiv \frac{1}{r}L(r) = \frac{1}{2r}\ln\!\left(\frac{1+r}{1-r}\right)$ is strictly increasing from 1 to $+\infty$, and  $\ell(t)^2 = x(t)^2 + y(t)^2$. As expected we see that $\sigma^{fp}$ is invariant under rotations around the $z$-axis and therefore only depends upon $z$ and $\ell = \sqrt{x^2 + y^2}$.

We now consider $\sigma^{fp}$ as a function of the parameters $(\ell,z)$ defined within the Bloch ball $\frak B = \{(\ell,z)\ :\ \ell^2 + z^2 \leq 1\}$. This differs from considering the entropy production obtained by allowing the evolution to run for a time $t$, where only the points $(\ell(t),z(t))$ attained by trajectories at time $t$ would be taken into account.
According to Eq. \eqref{eq:sigmamap}, the map entropy production can be defined as  
$$
\sigma_{\rm map}(t) = \inf_{(\ell,z) \in\frak{B}}\sigma^{fp}(t).
$$
We shall prove the following result (see Appendix~\ref{appA}).

\begin{theorem}\label{theo_entropymap}
For a two-level system undergoing a phase-covariant dynamics of the form ~\eqref{eq:ME}, for any $t\geq0$, $\sigma_{\rm map}(t) \geq 0$ if and only if the dynamics is P-divisible at time $t$.
\end{theorem}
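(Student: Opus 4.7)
My plan is to prove the two implications of Theorem~\ref{theo_entropymap} separately: the $\Leftarrow$ direction follows quickly from the contractivity of the relative entropy under positive trace-preserving maps, while the $\Rightarrow$ direction, argued by contrapositive, requires an explicit construction of a Bloch vector where $\sigma^{fp}$ becomes negative as soon as P-divisibility fails at time~$t$.

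For $\Leftarrow$, I would assume the dynamics is P-divisible at $t$ and invoke the M\"uller-Hermes contractivity result cited in the paper: every positive trace-preserving map contracts the quantum relative entropy. Applied to the infinitesimal propagator $\mathbb I + \delta t\,\mathcal L_t$ together with the instantaneous fixed-point property $\mathcal L_t \rho_A^{fp}(t)=0$, this yields
\begin{equation*}
D\bigl[\rho_A + \delta t\,\mathcal L_t\rho_A \,\|\, \rho_A^{fp}(t)\bigr] \leq D\bigl[\rho_A \,\|\, \rho_A^{fp}(t)\bigr] + o(\delta t)
\end{equation*}
for every state $\rho_A$. Expanding to first order in $\delta t$ gives $-\mathrm{Tr}\{\mathcal L_t\rho_A[\ln\rho_A - \ln \rho_A^{fp}(t)]\} \geq 0$, and taking the infimum over $\rho_A$ produces $\sigma_{\rm map}(t) \geq 0$.

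For $\Rightarrow$, starting from \eqref{eq:1} I would rewrite
\begin{equation*}
\sigma^{fp} = P(\ell,z)\, f(r) + \gamma_+ z_\infty (z_\infty - z)\, f(z_\infty),
\end{equation*}
with $P(\ell,z) := \Gamma \ell^2 + \gamma_+ z(z - z_\infty)$. The second summand is bounded on $\mathfrak B$ while $f(r)\to+\infty$ as $r\to 1$, so the sign of $\sigma^{fp}$ near the Bloch sphere is governed by the boundary polynomial $P_{\rm bd}(z) := P(\sqrt{1-z^2}, z) = \Gamma + (\gamma_+ - \Gamma)z^2 - \gamma_+ z_\infty z$, and the problem reduces to exhibiting $z\in[-1,1]$ with $P_{\rm bd}(z) < 0$ whenever \eqref{Pcrit} is violated. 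I would then split into cases: if the first clause of \eqref{Pcrit} fails (so $\gamma_+<0$ or $|z_\infty|>1$), or if $\Gamma < 0$, axial slices of $\sigma^{fp}$ along $\ell=0$ or $z = z_\infty$ are already negative and the claim is immediate; if only the second clause fails, i.e.\ $2\Gamma\leq\gamma_+$ and $\gamma_-^2>4\Gamma(\gamma_+ - \Gamma)$, then $P_{\rm bd}$ is a quadratic opening upward (since $\gamma_+ - \Gamma \geq \gamma_+/2 > 0$) with $P_{\rm bd}(\pm1) = \gamma_+(1\mp z_\infty) \geq 0$ and positive discriminant $\gamma_-^2 - 4\Gamma(\gamma_+-\Gamma)$. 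Its minimum sits at $z_* = \gamma_+ z_\infty / [2(\gamma_+-\Gamma)] \in [0,1)$ with strictly negative value, so approaching the pure state $(\ell,z)=(\sqrt{1-z_*^2},\,z_*)$ from inside $\mathfrak B$ lets the divergent $f(r)$ dominate with a negative coefficient and drives $\sigma^{fp}\to-\infty$, whence $\sigma_{\rm map}(t) < 0$.

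The main obstacle I anticipate is the boundary and sign book-keeping: I must verify that $z_*\in(-1,1)$ strictly (which holds once $2\Gamma<\gamma_+$, i.e.\ in any non-degenerate violation of the second clause), handle the degenerate sub-cases such as $2\Gamma=\gamma_+$, $z_\infty=1$, $\Gamma=0$, or $\gamma_+=0$ by limiting arguments, and confirm that the pure-state limit of $\sigma^{fp}$ is genuinely realised as an infimum over the interior of $\mathfrak B$ where the relative entropy is finite.
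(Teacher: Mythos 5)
Your proof is correct, but it diverges from the paper's own argument in the sufficiency direction. For ``P-divisibility $\Rightarrow \sigma_{\rm map}\geq 0$'' you invoke the M\"uller--Hermes contractivity of relative entropy under positive trace-preserving maps applied to the infinitesimal propagator; the paper states this general fact (Appendix~\ref{app:posep}) but deliberately does \emph{not} use it in Appendix~\ref{appA}, instead giving a direct computation that isolates a critical zone $\mathcal C\subset\mathfrak B$, reduces to the extremal case $2\Gamma=\alpha\gamma_+$ with $\alpha=1-\sqrt{1-z_\infty^2}$, and proves monotonicity of $\sigma^{fp}$ along horizontal slices. Your route is shorter and more general, but loses the paper's explicit identification of where positivity is tight. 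For the necessity direction your argument is essentially the paper's, repackaged: your boundary polynomial satisfies $P_{\rm bd}(z)=-\dot r_{|r=1}$, so exhibiting $z_*$ with $P_{\rm bd}(z_*)<0$ and letting the divergent $f(r)$ term drive $\sigma^{fp}\to-\infty$ is exactly the paper's ``$\dot r_{|r=1}>0$ for some $z$ implies $\lim_{\ell\to\ell_1}\sigma=-\infty$'' step. A nice feature of your version is that you rederive, via the discriminant $\gamma_-^2-4\Gamma(\gamma_+-\Gamma)$ and the endpoint values $P_{\rm bd}(\pm1)=\gamma_+(1\mp z_\infty)$, the equivalence between violation of the second clause of \eqref{Pcrit} and negativity of the minimum of $P_{\rm bd}$ on $[-1,1]$, whereas the paper outsources this to the cited characterization of P-divisibility. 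Two small points to tidy up: when the first clause fails because $|z_\infty|>1$, the fixed point leaves the Bloch ball and $L(z_\infty)$ is undefined, so $\sigma^{fp}$ itself is ill-posed rather than merely negative (the paper is equally silent on this); and your verification that $z_*\in[0,1)$ should use $2(\gamma_+-\Gamma)\geq\gamma_+$ to get $z_*\leq z_\infty$, together with $P_{\rm bd}(1)\geq 0$ to exclude $z_*=1$ — both of which you have implicitly.
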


It is already known in full generality that P-divisibility at time $t$ implies $\sigma^{fp}(t)\geq0$ for all initial states (see Appendix \ref{app:posep} and ~\cite{Muller2017}). However, the reverse has not been proven. Here we present a direct proof for a class of two-level quantum systems, which has the advantage of identifying trajectories where the non-negativity of $\sigma^{fp}$ and therefore P-divisibility might fail to occur. 

The structure of the proof is as follows. The mathematical details are given in Appendix~\ref{appA}. We first argue that in order for $\sigma^{fp}$ to be non-negative we must have $\Gamma\geq0$ and $\gamma_+\geq0$, which is also a necessary requirement for P-divisibility. Under these conditions we show that the positivity of $\sigma^{fp}$ might fail only inside a subset of $\mathfrak B$ called \emph{the critical zone} $\mathcal{C}$.
The proof then goes on by studying the sign of $\sigma^{fp}$ along the horizontal slices $z=\rm cste$ of $\mathfrak B$. It turns out that the sign of $\sigma^{fp}$ along these slices is given by the sign of $\dot r_{|r=1}$, which is directly related to P-divisibility~(see~\cite{Theret2025}). 
This eventually proves that $\sigma_\text{map}\geq0\Rightarrow$ P-divisibility.
The converse is obtained by directly computing the sign of $\sigma$ for a P-divisible dynamical map at time $t$.

\section{Conclusion}\label{sec:conclusion}
In this work, we have established a precise link between two key phenomena in open quantum dynamics, i.e. memory effects and irreversibility. 
Irreversibility  is quantified by entropy production, which is well understood for systems weakly coupled to an infinite bath, but still present a lack of consensus for strong coupling and finite bath situations. The first part of the paper was therefore dedicated to the comparison between several suggestions of entropy production definitions. The model we consider is a challenging situation combining a minimal bath (a single bosonic mode) with strong coupling. 
We have shown that all definitions of entropy production considered in this paper coincide at weak coupling, highlighting their relevance, while important discrepancies appear at stronger coupling. 
We have also analytically shown that two definitions coincide exactly (see the proof in Appendix~\ref{app:sigmaid}): the widely used definition  $\sigma^{Es}$ based on the energy variation of the bath~\cite{Esposito2010}, and a definition based on the instantaneous fixed point of the map, $\sigma^{fp}$. This is a surprising coincidence because, conversely to $\sigma^{Es}$, $\sigma^{fp}$ depends only on local quantities of the system. It should be noted that such an identity was obtained for the Jaynes-Cummings model, with the bosonic mode initially in a thermal state. Whether this conclusion is also valid for larger systems as long as the RWA is applied and the system $B$ is initially in a thermal state is an open question. Note that for the specific case of pure dephasing it was recently shown in ~\cite{Picatoste2025} that the identity is not valid.  

Regarding memory effects, many measures exist, including CP-divisibility, P-divisibility, and the BLP criterion. At weak coupling, they mostly coincide (see Appendix \ref{app:CPdivBLP} and \cite{Theret2025}), and we have found a very good correspondence with entropy production. More precisely, the dynamics is P-divisible at time $t$ if and only if entropy production is positive for all initial states.

However, as expected, the three figures of merit exhibit significant differences at strong coupling. We find that only P-divisibility shows a close correspondence with the entropy production based on the fixed point $\sigma^{fp}$. Still, the correspondance is not as perfect as for weak coupling: there exist small time intervals on which the dynamics is not P-divisibible but the entropy production is positive for all initial states. A perfect equivalence is obtained by introducing $\sigma_\text{map}$, a notion of entropy production at the map level, inspired from the local character of P-divisibility. We show that the dynamics is P-divisible whenever $\sigma_\text{map}$ is positive. This numerical observation is also confirmed by a rigorous proof valid for a large class of phase-covariant dynamics. Moreover, a very recent work \cite{Picatoste2026} extends our result to any CPTP map (quantum map), including maps which are not phase-covariant, and to any finite-dimensional quantum systems.

More than indicating that P-divisibility and $\sigma^{fp}$ are respectively the most relevant figures of merit for memory effects and irreversibility, this result highlights an important link between irreversibility and lack of memory effects. Conversely, it provides an interpretation of negative entropy production rates in terms of memory effects.

Finally, our results open many perspectives, such as a better understanding of what happens when the steady state is not a physical state (which can occur when the bath is not initially in a thermal state), and how to precisely characterise weak and strong coupling in finite bath situations. It also invites to further studies on the relation between memory effects and irreversibility in open quantum dynamics combining the viewpoints of open quantum systems and quantum thermodynamics, specially in settings where the memory effects can be directly quantified by the size of an explicit memory assisting a Markovian thermal processes \cite{Czartowski2023}, as well as in the light of the difference between revivals of information versus backflows of information, pointed out recently \cite{Buscemi2025}. Additionally, a natural next step is to extend this study to quantum thermodynamic frameworks based on the mean force Gibbs state \cite{Rivas2020Apr,Gonzalez2025,Strasberg2019Oct, Strasberg2020May}. It should also be pertinent to study these issues for controlled quantum systems in the context of open dynamics~\cite{koch2016,kochroadmap,lapert2013,mukherjee2013,tutorial24}.

\acknowledgments

C.L.L. acknowledges funding from the French National
Research Agency (ANR) under grant ANR-23-CPJ1-
0030-01.\\

\appendix

\section{Proof of Theorem~\ref{theo_entropymap}}\label{appA}

Notation: in this appendix, the fixed point entropy production is denoted for simplicity by $\sigma(t)\equiv \sigma^{fp}(t)$.

\subsection{Specific trajectories and the critical zone}
First, we study specific trajectories in the Bloch ball $\frak B$. It is easily checked that the trajectories in the plane $z = z_\infty$ at a given time $t$ remain in this plane for all times. Hence, the sign of $\sigma$ is the same as that of $\Gamma$ (see Eq. (\ref{eq:1})). Likewise, the trajectories along the $z$- axis ($\ell = 0$) stay on this axis and $\sigma = \gamma_+ K$ with $K\geq0$. Hence, the sign of $\sigma$ is the same as the sign of $\gamma_+$.

We deduce that if the system is not P-divisible because one of the coefficient rates becomes negative, then these trajectories result in negative entropy production. Therefore, $\sigma_{\rm map}(t) \geq0 \Rightarrow \gamma_+(t)\geq0, \Gamma(t)\geq0$.

\begin{lemma}
Assume that $\gamma_+(t)\geq0$ and $\Gamma(t)\geq0$.
The entropy production $\sigma(t)$ might fail to be non negative only if $0\leq z(t) < z_\infty$ and $r(t)\geq z_\infty$.
\end{lemma}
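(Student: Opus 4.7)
The plan is to analyze the explicit formula~\eqref{eq:1} term by term under the standing assumptions $\Gamma(t)\geq 0$ and $\gamma_+(t)\geq 0$. Since $f(r)\geq 1$ on $[0,1)$, the first summand $\Gamma\ell^{2}f(r)$ is manifestly non-negative, so the burden of producing a negative $\sigma$ falls entirely on the second summand. After factoring out the non-negative $\gamma_+$, the task reduces to showing that the product $(z_\infty-z)\bigl(z_\infty f(z_\infty)-zf(r)\bigr)$ is non-negative everywhere outside the target region $\{(\ell,z)\in\mathfrak{B}:0\leq z<z_\infty,\ r\geq z_\infty\}$.

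To do so I would partition the complement of the target region into three disjoint cases and, in each, use the monotonicity and positivity of $f$ on $[0,1)$ together with the elementary bound $r\geq|z|$ to fix compatible signs for the two factors. The three cases are: (i) $z\geq z_\infty$, where $(z_\infty-z)\leq 0$ and the chain $r\geq z\geq z_\infty\geq 0$ combined with $f$ increasing yields $zf(r)\geq z_\infty f(z_\infty)$, so the second factor is also non-positive; (ii) $z<0$, where $(z_\infty-z)>0$ and $zf(r)\leq 0\leq z_\infty f(z_\infty)$, making the second factor non-negative; and (iii) $0\leq z<z_\infty$ together with $r<z_\infty$, where both factors are positive because $z<z_\infty$ and $f(r)<f(z_\infty)$ together force $zf(r)<z_\infty f(z_\infty)$. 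These three cases cover $\mathfrak{B}$ minus the claimed critical zone, so the product is non-negative there and hence $\sigma\geq 0$.

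The only step that is more than a bare sign inspection is case (i), where one must carefully combine $r\geq |z|$ with $z\geq z_\infty\geq 0$ and the positivity and monotonicity of $f$ to secure the key inequality $zf(r)\geq z_\infty f(z_\infty)$. I expect this to be the main delicacy of the argument, since it is precisely here that the interplay between the radial coordinate $r$ and the axial coordinate $z$ selects the upper portion of the Bloch ball lying above $z_\infty$ as the critical zone $\mathcal{C}$. All other inequalities are elementary, and no boundary issue arises at $r=1$ since the statement concerns only where $\sigma$ may fail to be non-negative and formula~\eqref{eq:1} is well defined on $r<1$.
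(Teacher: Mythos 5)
Your proof is correct and follows essentially the same route as the paper: the authors also note that the $\Gamma\ell^{2}f(r)$ term is non-negative and then split the remaining term (written there as $\gamma_+(B+C)$ with $B=(z-z_\infty)zf(r)$ and $C=-(z-z_\infty)z_\infty f(z_\infty)$) into exactly your three cases, using the monotonicity of $f$ and $r\geq z\geq z_\infty$ in the case $z\geq z_\infty$. The only content in the paper's proof beyond yours is the auxiliary observation that $A+B\geq 0$ even inside the critical zone, which is not needed for this lemma itself.
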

The set 
$$
\mathcal C = \{(\ell,z)\in\frak B\ :\ 0\leq z < z_\infty\ \textrm{and}\ r\geq z_\infty\}
$$
is called the \emph{critical zone} as shown in Fig.~\ref{fig:critzone}.
\begin{figure}
\begin{center}
\includegraphics[width=0.45\textwidth]{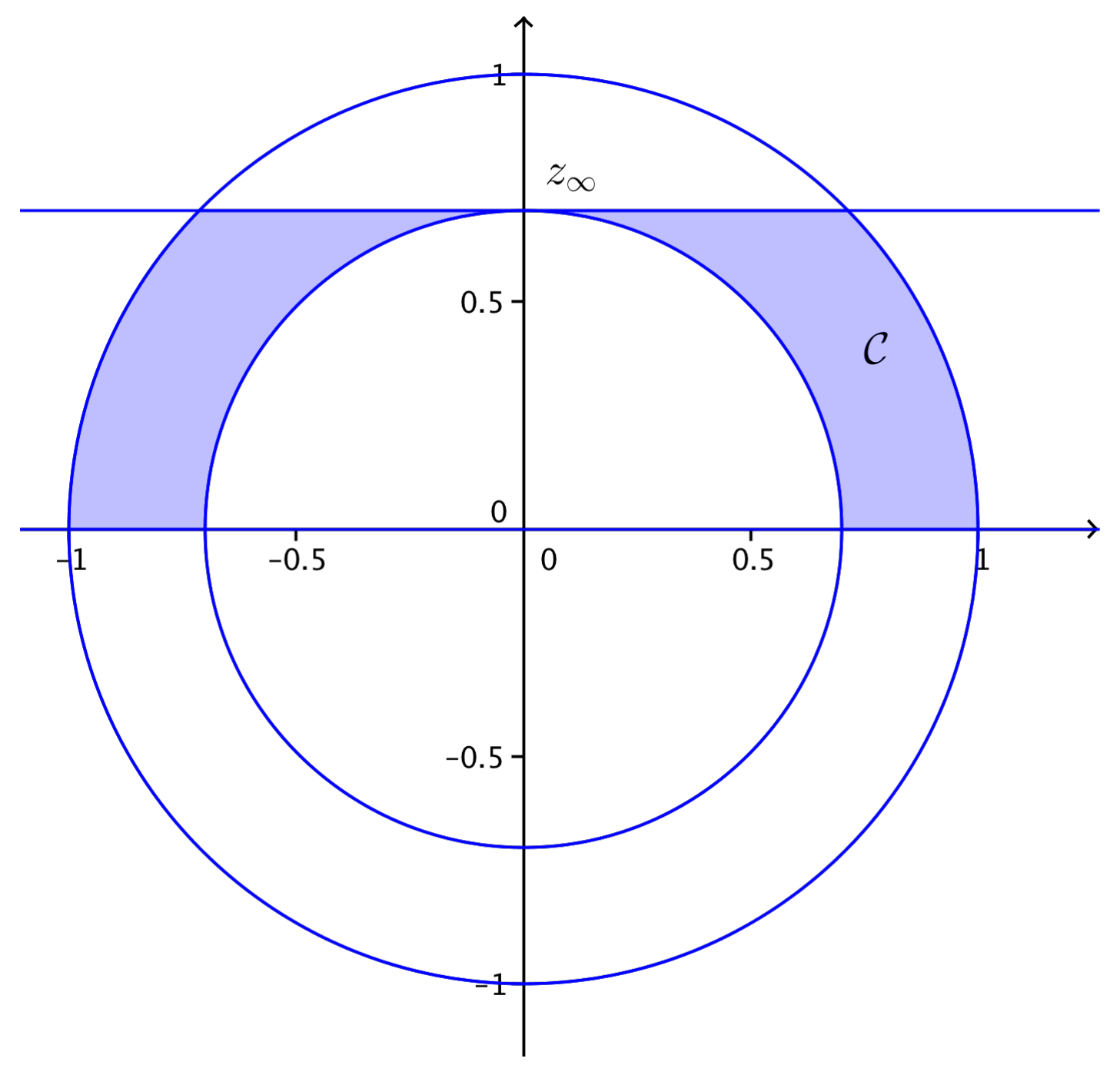}
\end{center}
\caption{Schematic illustration of the critical zone $\mathcal{C}$ (in blue) in the $(y,z)$- plane.}\label{fig:critzone}
\end{figure}

\begin{proof}
From Eq.~(\ref{eq:1}) we can write $\sigma$ as
\begin{equation}\label{eq:A1}
\sigma(t) = 2\Gamma A + \gamma_+(B + C),
\end{equation}
where
$$
A = \frac{\ell^2}{2}f(r),~B = (z - z_\infty)zf(r),
~C = -(z - z_\infty)z_\infty f(r_\infty).
$$
Recall that, as mentioned in the main text, $\sigma(t)$ is the fixed point entropy production extended to the entire Bloch ball. This means that in Eq.~\eqref{eq:A1}, differently from Eq.~\eqref{eq:1}, the parameters $(\ell,z)$ are arbitrary inside the Bloch ball $\frak B = \{(\ell,z)\ :\ \ell^2 + z^2 \leq 1\}$, which is different from considering the entropy production obtained by allowing the evolution to run for a time $t$, where only the points $(\ell(t),z(t))$ attained by trajectories at time $t$ would be taken into account.

In Eq.\eqref{eq:A1} we have $A\geq0$. The study can be divided into the following cases:.
\begin{enumerate}
\item $z - z_\infty \geq 0$. Then $r \geq z \geq z_\infty \geq 0$, $B\geq 0$, $C \leq 0$ and $B+C\geq0$.
\item $z - z_\infty \leq 0$. Then $C \geq 0$ and $z\leq z_\infty$.
\begin{enumerate}
\item $z\leq0$. Then $B \geq 0$ and $B + C \geq 0$.
\item $z\geq0$. Then $B \leq 0$.
\begin{enumerate}
\item $r\leq z_\infty$. Then $B + C \geq 0$.
\item $r\geq z_\infty$. This writes $\ell^2 + z^2 - z_\infty^2 \geq 0$, i.e., $\ell^2 + (z - z_\infty)(z + z_\infty) \geq 0$.
Hence, $\ell^2 \geq (z_\infty - z)(z + z_\infty) \geq 0$.
This gives $A+B \geq \frac{f(r)}{2}\left(z_\infty - z\right)^2\geq0$.
Therefore, $A + B \geq 0$.
\end{enumerate}
\end{enumerate}
\end{enumerate}
Examining the different cases shows that, whenever $(\ell,z)$ is outside the critical zone, we have that $0\leq\sigma$.\qed
\end{proof}
Notice that $A + B + C \geq 0$ in any case. In particular, if the evolution is CP-divisible at time $t$, we have $2\Gamma\geq\gamma_+$.
Then $\sigma_{\rm map}\geq\gamma_+(A+B+C)\geq0$.

\subsection{Study of the sign of $\dot r$}

\noindent
Since the time $t$ is fixed, we can drop it. Straightforward computations give 
\begin{eqnarray*}
\dot \ell &=& -\Gamma\ell,\\
\dot z &=& \gamma_+(z_\infty - z),\\
\dot r &=& \frac{1}{r}\left(-\Gamma\ell^2 + \dot z z\right),\\
\sigma &=& -\dot r L(r) + \dot zL(z_\infty).
\end{eqnarray*}
We study how these functions behave with respect to the parameter $\ell$, which means that we fix the coordinate $z$ and examine the functions over horizontal slices of the Bloch ball.
To remain within the Bloch ball, the parameter $\ell$ must stay within the interval $0 \leq \ell \leq \sqrt{1 - z^2}$, but $\ell$ belongs to $[0,+\infty[$ in general.

\begin{lemma}
Suppose that $\gamma_+,\Gamma\geq0$ and that $0\leq z \leq z_\infty$.
There exists a unique $\hat\ell = \hat\ell(z) \geq0$ such that the function $\ell\mapsto\dot r(\ell)$ is positive and strictly decreasing over $\ell\in[0,\hat\ell]$, $\dot r(\hat\ell) = 0$, and the function $\ell\mapsto\dot r(\ell)$ is negative and strictly decreasing over $\ell\in[\hat\ell,+\infty]$.
\end{lemma}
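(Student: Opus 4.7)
The plan is to treat $\dot r$ as a function of $\ell$ alone (with $z$ and $t$ held fixed) and recover existence and uniqueness of $\hat \ell$ from an intermediate value theorem applied to a strictly decreasing function. First, substituting $\dot \ell = -\Gamma \ell$, $\dot z = \gamma_+(z_\infty - z)$ and $r = \sqrt{\ell^2 + z^2}$ into the expression for $\dot r$ displayed just above the lemma yields
\begin{equation}
\dot r(\ell) = \frac{\dot z\, z - \Gamma\, \ell^2}{\sqrt{\ell^2 + z^2}}.
\end{equation}
Under the hypotheses $\gamma_+ \geq 0$ and $0 \leq z \leq z_\infty$ one has $\dot z \geq 0$, hence $\dot r(0) = \dot z \geq 0$; and, in the main (non-degenerate) case $\Gamma > 0$, $\dot r(\ell) \sim -\Gamma\, \ell \to -\infty$ as $\ell \to +\infty$.

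Second, I would establish strict monotonicity by direct differentiation in $\ell$. A short computation gives
\begin{equation}
\frac{d \dot r}{d \ell}(\ell) \;=\; -\,\frac{\ell\bigl[\Gamma(\ell^2 + 2 z^2) + \dot z\, z\bigr]}{(\ell^2 + z^2)^{3/2}},
\end{equation}
which is non-positive on $[0,+\infty)$ and strictly negative for $\ell > 0$ as soon as $\Gamma > 0$ or $\dot z\, z > 0$. Thus $\dot r$ is continuous, strictly decreasing on $[0,+\infty)$, starts non-negative at $\ell = 0$, and tends to $-\infty$ at infinity, so the intermediate value theorem delivers a unique $\hat\ell = \hat\ell(z) \geq 0$ at which $\dot r$ vanishes, with $\dot r > 0$ on $[0,\hat\ell)$ and $\dot r < 0$ on $(\hat\ell, +\infty)$, exactly as claimed.

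The only real subtlety lies in the degenerate corners of the parameter space. If $\Gamma = 0$ while $\dot z\, z > 0$, then $\dot r > 0$ on the entire ray and one naturally sets $\hat\ell = +\infty$; if $\dot z = 0$ and $\Gamma > 0$ (i.e.\ $z = z_\infty$), then $\hat\ell = 0$ and the positive interval degenerates to a point; finally $\Gamma = \dot z = 0$ makes $\dot r \equiv 0$, in which case the claim is vacuous. None of these corners affects the ensuing use of the lemma in the proof of Theorem~\ref{theo_entropymap}. I expect the main obstacle of the \emph{broader} programme to lie not in this lemma, which reduces to one-variable calculus, but in the next step: connecting the sign of $\dot r_{|r=1}$—read off from these horizontal-slice profiles—back to the P-divisibility criterion recorded in Eq.~\eqref{Pcrit}, so as to conclude that positivity of $\sigma^{fp}$ on each slice controls P-divisibility globally.
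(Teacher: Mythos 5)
Your proof is correct and follows essentially the same route as the paper's: differentiate $\dot r$ in $\ell$, observe that the derivative has the sign of $-\ell\bigl[\Gamma(\ell^2+2z^2)+\dot z\,z\bigr]$, and conclude by strict monotonicity from $\dot r(0)=\dot z\geq 0$ down to $-\infty$. Your explicit treatment of the degenerate corners ($\Gamma=0$ with $\dot z\,z>0$, where no finite $\hat\ell$ exists) is in fact slightly more careful than the paper's parenthetical remark, which incorrectly calls $\dot r$ constant in that case.
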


\begin{proof}
By differentiating with respect to $\ell$ we get that $\frac{d}{\ell}\dot r(\ell)$ has the same sign as $-\Gamma(\ell^2 + 2z\dot z) - z\dot z$.

Since $\gamma_+\geq0$ and $0\leq z \leq z_\infty$, we have $\dot zz \geq 0$.
Since $\Gamma\geq0$, we deduce that the function $\ell \mapsto \dot r(\ell)$ is strictly decreasing over $[0,+\infty[$, from $\dot r(0) = \dot z \geq 0$ to $\lim_{\ell\to+\infty}\dot r(\ell) = -\infty$ (unless $\Gamma(t) = 0$, in which case the function is constant). We thus conclude that there exists a unique $\hat\ell$ such that $\dot r(\hat \ell) = 0$.
Solving for $\dot r = 0$, we get  
$$
\hat\ell = \sqrt{\frac{\dot z z}{\Gamma}}.
$$
On the interval $[0,\hat\ell]$ over which $\dot r(\ell)\geq0$, the map $\ell \mapsto -\Gamma\ell^2 + \dot z z$ is non-negative and it is also strictly decreasing.
Since $\ell\mapsto\frac{1}{r(\ell)}$ is also non-negative and strictly decreasing, we conclude that the map $\ell \mapsto \dot r(\ell)$ is strictly decreasing over $[0,\hat\ell]$.$\hfill\square$ 
\end{proof}

From the proof we have $\ell(z) = \sqrt{\frac{\dot z z}{\Gamma}}$ if $\Gamma\neq0$.
Let us determine the conditions under which $(\hat\ell(z),z)$ belongs to the critical zone $\mathcal C$.
\begin{lemma}
Suppose that $\gamma_+,\Gamma\geq0$ and that $0\leq z \leq z_\infty$.
Set $\hat r(z)^2 \equiv \hat\ell(z)^2 + z^2$.
Then $\hat r(z) \geq z_\infty$ if and only if
\begin{equation}
\label{Condition_ell1}
2\Gamma < \gamma_+\ \textrm{and}\ z\in[z_+,z_\infty],
\end{equation}
where $z_+ = z_\infty\frac{\Gamma}{\gamma_+ - \Gamma}$.
In particular, if $2\Gamma \leq \gamma_+$, then $\hat r(z) \geq z_\infty$ for all $z\in[0,z_\infty]$.  
\end{lemma}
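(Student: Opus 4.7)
The plan is to translate the geometric inequality $\hat r(z) \geq z_\infty$ into an explicit quadratic inequality in $z$ and then factor it using the obvious root at $z = z_\infty$. Invoking the formula $\hat\ell(z)^2 = \dot z\, z/\Gamma$ from the preceding lemma (valid for $\Gamma > 0$) together with $\dot z = \gamma_+(z_\infty - z)$, I would substitute into $\hat r(z)^2 = \hat\ell(z)^2 + z^2$ to get
\[
\hat r(z)^2 \;=\; \frac{z\bigl[\gamma_+ z_\infty - (\gamma_+ - \Gamma)\, z\bigr]}{\Gamma}.
\]
Since $\Gamma \geq 0$, the inequality $\hat r(z)^2 \geq z_\infty^2$ is equivalent to
\[
Q(z) \;\equiv\; (\gamma_+ - \Gamma)\, z^2 - \gamma_+ z_\infty\, z + \Gamma z_\infty^2 \;\leq\; 0.
\]

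The key step is to exploit the root $z = z_\infty$ of $Q$, reflecting geometrically that $\hat\ell(z_\infty) = 0$ since $\dot z$ vanishes at the fixed point. Factoring out $(z - z_\infty)$ gives
\[
Q(z) \;=\; (z - z_\infty)\bigl[(\gamma_+ - \Gamma)\, z - \Gamma z_\infty\bigr],
\]
so the second root is precisely $z_+ = \Gamma z_\infty/(\gamma_+ - \Gamma)$ whenever $\gamma_+ \neq \Gamma$. The sign of $Q$ on $[0, z_\infty]$ is then controlled by the position of $z_+$ relative to $z_\infty$ and by the sign of the leading coefficient $\gamma_+ - \Gamma$. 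I would split into the cases $\gamma_+ > 2\Gamma$, $\gamma_+ = 2\Gamma$, $\Gamma < \gamma_+ < 2\Gamma$, $\gamma_+ = \Gamma$, and $\gamma_+ < \Gamma$. In the first case $0 \leq z_+ < z_\infty$, the parabola opens upward, and $Q \leq 0$ exactly on $[z_+, z_\infty]$. In all the remaining cases one verifies that $Q > 0$ on $[0, z_\infty)$ with $Q(z_\infty) = 0$: when $\gamma_+ = 2\Gamma$, $Q = \Gamma(z - z_\infty)^2$; when $\Gamma < \gamma_+ < 2\Gamma$, $z_+ > z_\infty$ and the upward parabola is positive to the left of $z_\infty$; when $\gamma_+ \leq \Gamma$, the factor $(\gamma_+ - \Gamma)\, z - \Gamma z_\infty$ is strictly negative throughout $[0, z_\infty]$ while $(z - z_\infty) \leq 0$, so $Q \geq 0$ with equality only at $z_\infty$. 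Assembling these subcases yields the claimed biconditional.

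The degenerate case $\Gamma = 0$ is handled directly: then $\dot r = z\dot z/r \geq 0$ on $[0, z_\infty]$, so $\hat\ell$ is effectively $+\infty$ and $\hat r \geq z_\infty$ trivially, in agreement with the formal limit $z_+ \to 0$ in the formula. The ``in particular'' statement is read off from the factorization: the condition $2\Gamma \leq \gamma_+$ is precisely what ensures $z_+ \leq z_\infty$, so that the interval on which $\hat r(z) \geq z_\infty$ is non-degenerate. The proof presents no sharp conceptual obstacle; the only care required is the organization of the sign analysis of $Q$ across the subcases of how $\gamma_+$ compares to $\Gamma$ and to $2\Gamma$, but each subcase is routine once the factorization is in hand.
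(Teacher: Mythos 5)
Your proof is correct and follows essentially the same route as the paper: reduce $\hat r(z)\geq z_\infty$ to the sign of a quadratic in $z$ having $z_\infty$ as a root, then run the case analysis on $\gamma_+$ versus $\Gamma$ and $2\Gamma$ (you factor out $(z-z_\infty)$ directly where the paper computes the discriminant and quadratic-formula roots, a purely cosmetic difference). Note that the final ``in particular'' clause as printed contradicts the biconditional it follows (it should presumably read that $2\Gamma\geq\gamma_+$ forces $\hat r(z)\leq z_\infty$ on $[0,z_\infty]$); your case analysis, like the paper's own proof, establishes the corrected statement rather than the literal one.
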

\begin{proof}
For $0\leq z \leq z_\infty$, the inequality $\hat r(z) \geq z_\infty$ is equivalent to  
$$
-\Gamma z_\infty^2 + \gamma_+ z_\infty z + (\Gamma - \gamma_+)z^2 \geq 0.
$$
If $\gamma_+ = \Gamma$, then the inequality holds if and only if $z\geq z_\infty$.
If $\gamma_+\neq \Gamma$, we get a second-degree polynomial in $z$ whose sign is easy to determine. The discriminant of the polynomial is 
$$
\Delta = z_\infty^2(2\Gamma - \gamma_+)^2 \geq 0.
$$
The roots are thus
\begin{equation}
z_\pm = z_\infty\frac{-\gamma_+ \pm (2\Gamma - \gamma_+)}{2(\Gamma - \gamma_+)}
=
\left\{
\begin{array}{ll}
z_- &= z_\infty\\
z_+ &= z_\infty\frac{\Gamma}{\gamma_+ - \Gamma}
\end{array}
\right..
\end{equation}
Suppose that $\Gamma>\gamma_+$, then $z_+ <0$ and the polynomial is negative for all $z\in[0,z_\infty]$. Suppose that $\Gamma<\gamma_+$, then $z_+ > z_- \iff 2\Gamma > \gamma_+$. We finally obtain the following statements.
If $2\Gamma > \gamma_+$ then the polynomial is negative for all $z\in[0,z_\infty]$. If $2\Gamma < \gamma_+$ then the polynomial is negative for all $z\in[0,z_+]$, then positive for all $z\in[z_+,z_\infty]$. The interval is reduced to $\{z_\infty\}$ if $2\Gamma = \gamma_+$.\qed
\end{proof}
We have to consider now the condition $\hat\ell\leq\ell_1=\sqrt{1 - z^2}$.
This condition easily implies that $\dot r_{|r = 1}\leq0$. It has been shown in~\cite{Theret2025} that the condition $\dot r_{|r = 1}\leq0$ for all $r$ is equivalent to P-divisibility.
\subsection{P-divisibility is necessary}
Suppose that the evolution is not P-divisible.
If this occurs due to one of the coherence rates being negative, then we can conclude that $\sigma_{\rm map} < 0$. Suppose now that both coherence rates are non-negative. Since the evolution is not P-divisible, there exists $z$ such that $\dot r_{|r = 1}>0$.
Examining the formula for $\sigma$, 
we get, for this value of $z$,  $\lim_{\ell\to\ell_1}\sigma = -\infty$. We conclude that $\sigma_{\rm map} < 0$.
We thus have shown that $$\sigma_{\rm map} \geq 0 \Rightarrow \textrm{P-divisibility}.$$
\subsection{P-divisibility is sufficient}
Now suppose that the evolution is P-divisible at time $t$.
Since $\gamma_+(t)\geq0$ and $\Gamma(t) \geq 0$, we know that $\sigma\geq0$ everywhere outside the critical zone $\mathcal C$. We therefore consider the situation where 
\begin{equation}
2\Gamma < \gamma_+\ \textrm{and}\ z\in[0,z_\infty].
\end{equation}
In the P-divisibility case, we can also assume that $2\Gamma \geq \alpha\gamma_+$, where $\alpha = 1 - \sqrt{1 - z_\infty^2}$ (recall that $\gamma_- = z_\infty \gamma_+$). 
Recall that, since we are in the critical zone, $z\in[0,z_\infty]$ and for each such $z$, $\ell_\infty = \sqrt{z_\infty^2 - z} \leq \ell \leq 1$. For each point $(\ell,z)$, the value of $-\dot r$ increases with $\Gamma$, so does $\sigma$.
We conclude that the values of $\sigma$ are bounded from below by those of $\sigma_m$ obtained by taking $\Gamma = \frac{\alpha}{2}\gamma_+$.
Hence we restrict to this case, and $\sigma_m$ has the same sign as 
\begin{eqnarray*}
s_m &=& \frac{\alpha\ell^2}{2r}L(r) + (z_\infty - z)\left(L(z_\infty) - \frac{z}{r}L(r)\right)\\
&=& \alpha\frac{r^2 - z^2}{2r}L(r) + (z_\infty - z)\left(L(z_\infty) - \frac{z}{r}L(r)\right)\\
&=& \frac{L(r)}{r}\left(\left(1-\frac{\alpha}{2}\right)z^2 - z_\infty z + \frac{\alpha}{2} \right) \\
& & + \frac{L(r)}{r}\frac{\alpha}{2}(r^2 - 1) + (z_\infty -z)L(z_\infty).
\end{eqnarray*}
It is easily checked that the coefficient $\left(1-\frac{\alpha}{2}\right)z^2 - z_\infty z + \frac{\alpha}{2}$ is non-negative for all $z$. Moreover, the function $\ell\mapsto\frac{L(r)}{r}\left(\left(1-\frac{\alpha}{2}\right)z^2 - z_\infty z + \frac{\alpha}{2} \right)$ is strictly increasing.
We then consider the second term
$$
A = \frac{L(r)}{r}\frac{\alpha}{2}(r^2 - 1) + (z_\infty -z)L(z_\infty).
$$
As a function of $\ell$ we get
$$
\frac{d A}{d\ell} = \frac{\alpha}{2}\frac{dr}{d\ell}\frac{d}{dr}\left(\frac{L(r)}{r}(r^2 - 1)\right).
$$
We find that $\frac{d A}{d\ell}$ has the sign of $B = L(r) - \frac{r}{1+r^2}$.
We have 
$$
\frac{d B}{dr} = \frac{4r^2}{(1-r^4)(1 + r^2)}\geq0.
$$
We conclude that the function $B(r)$ is strictly increasing from $B(0) = 0$, hence the function $\ell\mapsto A(\ell)$ is strictly increasing.
We deduce that the function $\ell\mapsto s_m(\ell)$ is also strictly increasing.
Moreover, we have
$$
s_m(\ell_\infty) = \frac{L(z_\infty)}{z_\infty}\left(\frac{\alpha}{2}(z_\infty^2 - z^2)+(z_\infty - z)^2\right)\geq0.
$$
Hence, for all $\ell\geq\ell_\infty$, $s_m(\ell)\geq0$, we conclude that $\sigma_{\rm map}\geq0$. The condition of P-divisibility is therefore sufficient.

\section{Finding the sign of $\sigma_\text{map}$}\label{app:findingsigmamap}

In this section, we provide a brief overview of the analytical analysis used to determine the sign of the minimum in the definition of map entropy production,  $\sigma_\text{map}(t) = \min_{ \rho_A}  \Big\{\tilde\sigma_{fp}[\rho_A]\Big\}$, with $\tilde\sigma_{fp}[\rho_A]:= -  {\rm Tr} \left\{{\cal L}_t \rho_A\left[ \ln \rho_A - \ln \rho_A^{fp}(t)\right] \right\}$, the fixed point entropy production extended to the whole Bloch ball.
For a state $\rho_A$ with Bloch coordinates $x$, $y$, and $z$, we denote by $\theta$ and $\phi$ the usual spherical angles, and we recall the notation for the norm of the Bloch vector, $r:=\sqrt{x^2+y^2+z^2}$. 
With these parameters, $l^2 = x^2+y^2 = r^2 \sin^2\theta$ and Eq.~\eqref{eq:A1} becomes
\bea
\tilde\sigma_{fp}[\rho_A]&=& \Gamma(t)  r L(r) \sin^2\theta \nn\\
 &&- \gamma_{+}(t)(r\cos\theta - z_\infty)[L(z_\infty)- \cos\theta L(r)].\nn\\
\eea
This can be recast as a function of $X:=\cos\theta$,
\bea
g(X):=\tilde\sigma_{fp}[\rho_A] &=& \Gamma  r L(r)(1-X^2) \nn\\
 &&- \gamma_{+}(rX - z_\infty)[L(z_\infty)- X L(r)].\nn\\
\eea
The function $g$ is a simple polynomial function of $X$. If $\gamma_+-\Gamma > 0$ ($\gamma_+-\Gamma < 0$), $g$ is convex (concave), with a minimum (maximum) at 
\be
X_0 := \frac{\gamma_+[rL(z_\infty) +z_\infty L(r)]}{2 r L(r)(\gamma_+-\Gamma)}.
\ee
If $\gamma_+ >0$, then $X_0 <0$, while $X_0 >0$ if $\gamma_+ <0$.
In particular, for the region on which the numerical minimization is not tractable (around the time $gt = 1.26$) in the plot of Fig.~\ref{fig:Pdiv_minEP_strong}(b) of the main text, $\gamma_+(t)$ is positive. Since $X:=\cos(\theta)$, the acceptable values of $X$ are therefore between -1 and 1. Then, for values of $r$ such that $X_0 \leq -1$, the minimum of $\tilde\sigma_{fp}[\rho_A]$ is reached for $X = -1$. Substituting in $g$, we obtain $g(-1) = \gamma_+(r+z_\infty)[L(z_\infty)+L(r)])$, which is always positive. 

However, for values of $r$ such that $X_0\geq -1$, the minimum of $\tilde\sigma_{fp}[\rho_A]$ is reached for $X = X_0$, which leads to 
\bea
g(X_0) &=& -\frac{1}{4}\frac{\gamma_+^2[rL(z_\infty)+z_\infty L(r)]^2}{rL(r)(\gamma_+-\Gamma)} + \Gamma r L(r)\nn\\
&&+ \gamma_+z_\infty L(z_\infty).
\eea
For $r \rightarrow 1$, $L(r) \rightarrow \infty$, so that
\be
g(X_0) \underset{r\rightarrow 1}{\sim} \left(\Gamma -\frac{\gamma_+^2z_\infty^2}{4(\gamma_+-\Gamma)}\right) L(r) < 0.
\ee
Additionally, one can verify that for the parameter settings of Fig.~\ref{fig:Pdiv_minEP_strong}(b), $X_0 \geq-1$ for $r \rightarrow 1$. We conclude that the minimum of $\tilde\sigma_{fp}[\rho_A]$ over all state is negative, so that the map entropy production $\sigma_\text{map}$ is negative at $gt= 1.26$. The same analytical reasoning can be reproduced for any instant of time.

\section{Additional results}\label{app:addplots}
In this section we present some additional results in Figs. \ref{figapp:covsEint}, \ref{figapp:popvsEint}, \ref{figapp:zoom}, and \ref{fig:RWA}, to illustrate some points mentioned in Sec.~\ref{sec:EPcomparison} on the comparison between the different entropy productions as well as the justification of the RWA in Fig.~\ref{fig:all}.

Figure~\ref{figapp:covsEint} presents the plots of the correlation build-up rate $\dot I_{A:B}$ and the time-derivative of interaction energy $\dot E_\text{int}(t) = {\rm Tr}[\dot \rho_{AB}(t) V_{AB}]$, in a weak coupling regime. We observe a clear anti-correlation between the oscillations of the two quantities. Therefore, a decrease in the coupling energy is accompanied by an increase in correlation and a loss of local information, expressed as a positive entropy production. Note that the integrated interaction energy is always negative, as is typical in correlated systems (see Appendix A of~\cite{Latune2023Aug}), so that a decrease of interaction energy results in an increase in absolute value. Conversely, an increase in coupling energy is followed by a decrease in correlations and an increase in local information.

Figure~\ref{figapp:popvsEint} shows that the oscillations of the coupling energy $\dot E_\text{int}(t)$ are related to the oscillations of $\dot p_A(t) = \bra1 |\rho_A(t)|1\ket$, which themselves follow the exchanges of quanta of energy between $A$ and $B$. This makes sense because during an exchange of quanta, the imbalance between $\omega_A$ and $\omega_B$ is compensated by a contribution from the interaction energy. Therefore, the oscillations of $\dot p_A(t)$  generates the oscillations of $\dot E_\text{int}(t)$.

Figure~\ref{figapp:zoom} provides a zoom of Fig.~\ref{fig:all} of the main text, containing the plots of the different definitions of entropy production $\sigma^{Es}$, $\sigma^{El}$, $\sigma^{Co}$,  $\sigma^{fp}$, and $\dot I_{A:B}$, in the strong coupling regime.

Finally, Fig.~\ref{fig:RWA} shows the validity of the RWA for the stronger coupling strength chosen for Fig. \ref{fig:all}, namely, $\omega_B/\omega_A = 0.85$, $\Delta/\omega_A = 0.15$, $\omega_A\beta_A = 1.1$, $\omega_A\beta_B = 0.3$, and $g/\omega_A =0.1$. Panel (a) shows the excited population, computed with the Jaynes-Cummings model and with the Rabi model while panel (b) represents the excited population computed with the Jaynes-Cummings model and average of the excited population over the period of fast oscillations of the non-secular terms $2\pi/(\omega_A+\omega_B)$.

\begin{figure}
\begin{center}
    \includegraphics[width=0.45\textwidth]{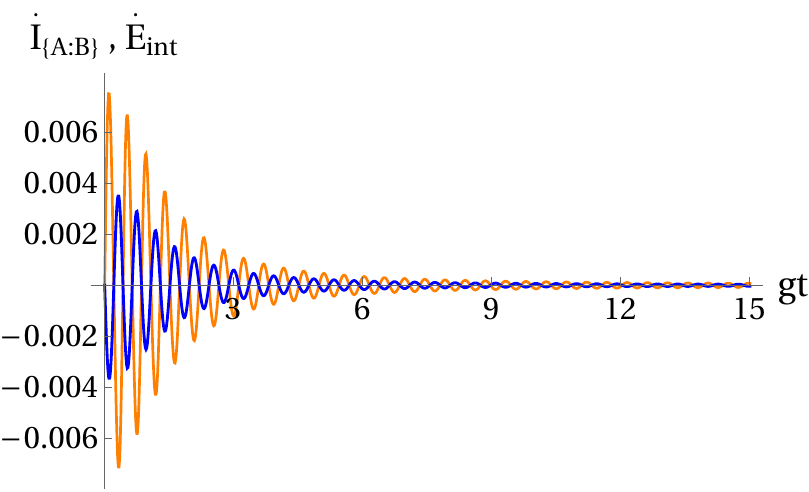}
\end{center}
\caption{Plots of the correlation build-up rate $\dot I_{A:B}$ (in orange) and the time-derivative of interaction energy $\dot E_\text{int}(t) = {\rm Tr}[\dot \rho_{AB}(t) V_{AB}$ (in blue), in weak coupling, with $\omega_B/\omega_A = 0.6$, $\Delta/\omega_A = 0.4$, $\omega_A\beta_A = 1.1$, $\omega_A\beta_B = 0.3$, and $g/\omega_A =0.03 $.}\label{figapp:covsEint}
\end{figure}

\begin{figure}
\begin{center}
    \includegraphics[width=0.45\textwidth]{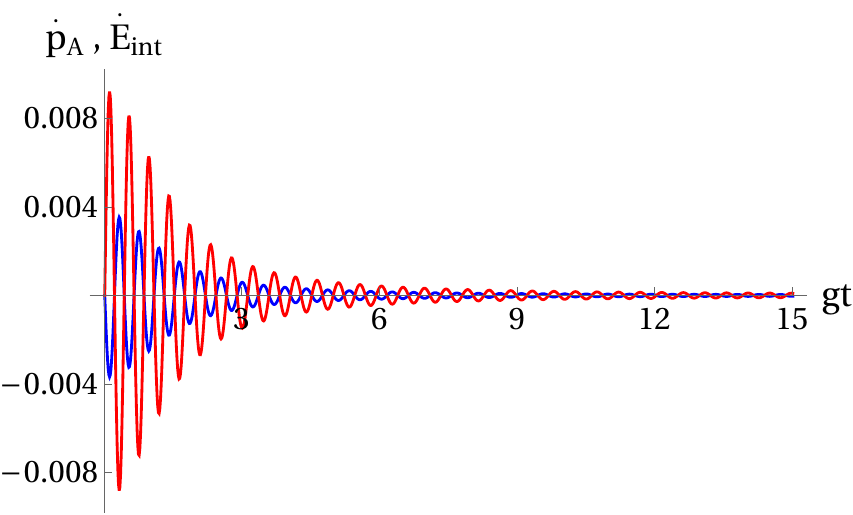}
\end{center}
\caption{Plots of the time-derivative of interaction energy $\dot E_\text{int}(t) = {\rm Tr}[\dot \rho_{AB}(t) V_{AB}$ (in blue) and the time-derivative of the excited populations of $A$ $\dot p_A(t) =\bra 1|\dot\rho_A(t)|1\ket $ (in red), in weak coupling, with $\omega_B/\omega_A = 0.6$, $\Delta/\omega_A = 0.4$, $\omega_A\beta_A = 1.1$, $\omega_A\beta_B = 0.3$, and $g/\omega_A =0.03 $.}\label{figapp:popvsEint}
\end{figure}

\begin{figure}
\begin{center}
    \includegraphics[width=0.45\textwidth]{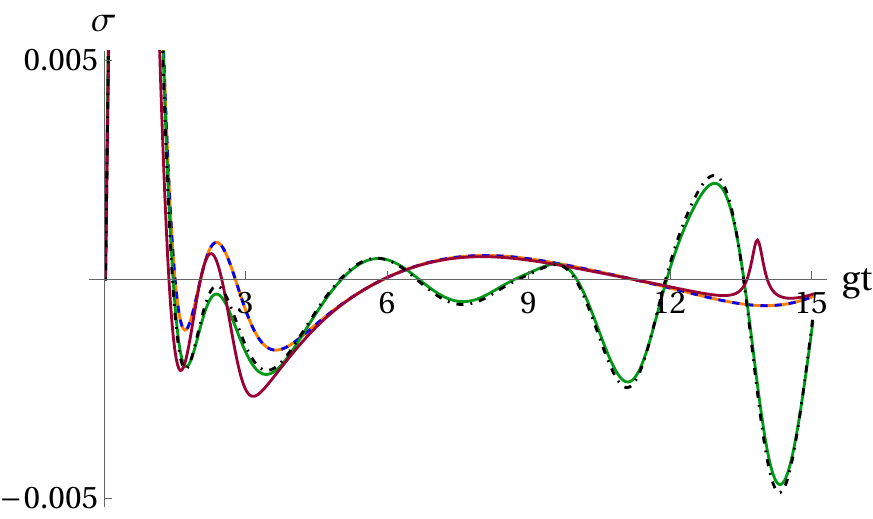}
\end{center}
\caption{Zoom of the plot presented in Fig. \ref{fig:all}, section \ref{sec:EPcomparison}, for the different definitions of entropy production: $\sigma^{Es}$ (orange, thick), $\sigma^{El}$ (green), $\sigma^{Co}$ (purple),  $\sigma^{fp}$ (dashed blue), and $\dot I_{A:B}$ (dot-dashed black), in a strong coupling regime, with $\omega_B/\omega_A = 0.85$, $\Delta/\omega_A = 0.15$, $\omega_A\beta_A = 1.1$, $\omega_A\beta_B = 0.3$, and $g/\omega_A =0.1 $.}\label{figapp:zoom}
\end{figure}

\begin{figure}
\begin{center}
    (a)\includegraphics[width=0.35\textwidth]{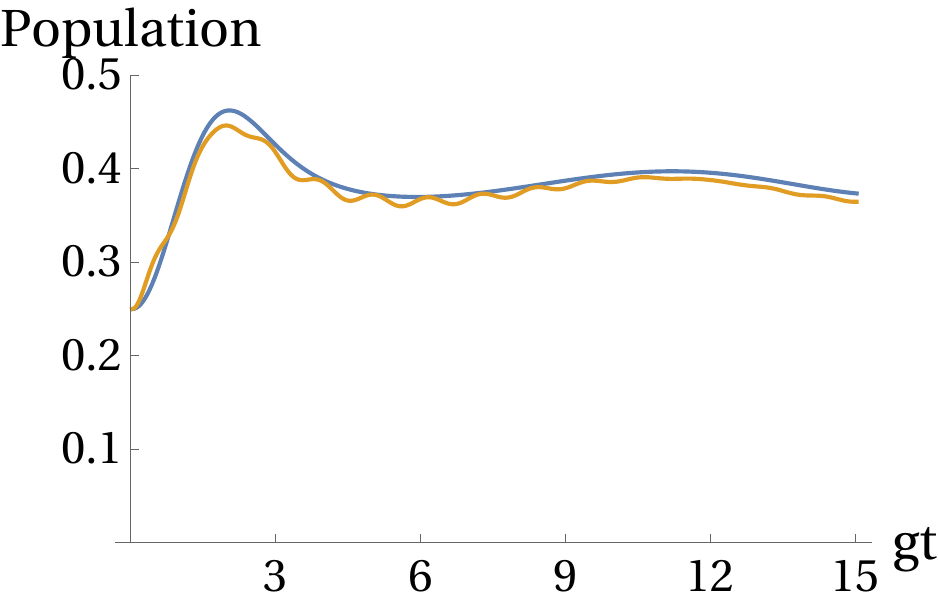}
    (b)\includegraphics[width=0.35\textwidth]{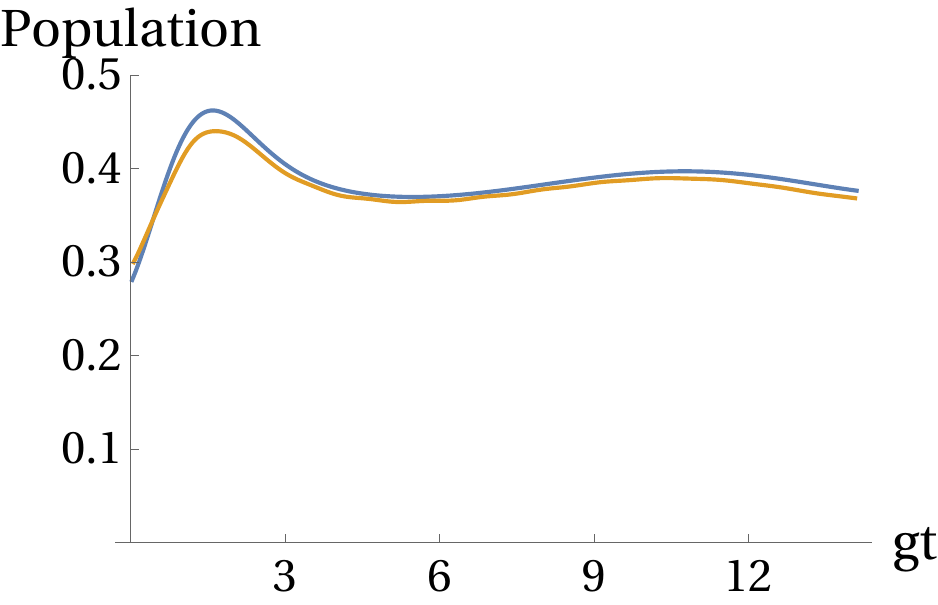}
\end{center}
\caption{(a) Plots of the excited population of system $A$ obtained by the Jaynes-Cummings model (blue line) and by the Rabi model (orange line). (b) Plots of the excited population of $A$ obtained by the Jaynes-Cummings model (blue line) and plot of the excited population given by the Rabi model average over the fast oscillating period $2\pi/(\omega_A+\omega_B)$ (orange line). The parameters used are $\omega_B/\omega_A = 0.85$, $\Delta/\omega_A = 0.15$, $\omega_A\beta_A = 1.1$, $\omega_A\beta_B = 0.3$, and $g/\omega_A =0.1 $ }\label{fig:RWA}
\end{figure}

\section{Comparison between the entropy production, CP-divisibility and the  BLP criterion}\label{app:CPdivBLP}
In this section we present some  results to highligh the relation between  entropy production and CP-divisibility (Fig.~\ref{figapp:CPdiv}) and the BLP criterion (Fig.~\ref{figapp:BLP}). We consider the minimum over all initial states of all entropy production definitions.

Firstly, we recall briefly the meaning of these criteria. Memory effect characterizations consider the divisibility properties of the dynamics. 
 More precisely, the most general dynamics of a quantum system is described by a quantum map~\cite{Jagadish2019}, which are CPTP maps (Completely Positive Trace Preserving maps). A quantum map $\Lambda(t,0)$ describing the time evolution of a quantum system from $0$ to $t$, is said to be CP-divisible (P-divisible) if for all $t' \in[0;t]$, the intertwined map $\Lambda(t,t')$ describing the evolution from $t'$ to $t$ is completely positive (positive). We recall here that the dynamics is CP-divisible at time $t$ if and only if all decay rates in the master equation are positive at this same instant of time $t$~\cite {Rivas2014,Breuer2016, Vega2017,Theret2025}. This criterion is used to obtain the plot in Fig.~\ref{figapp:CPdiv}. 
 
Another well-known memory effect characterization is the BLP criterion~\cite{Breuer2009}, which measures the contractivity of the dynamics, interpreted as a backflow of information. There is no backflow of information (interpreted as the absence of memeory effects) if and only if 
\be\label{eqapp:BLPc}
\frac{d}{dt} ||\Lambda_{t,0} (\rho_1 - \rho_2) ||_1 \leq  0 ~~~~\forall t,
\ee
for any states $\rho_1$ and $\rho_2$, where $||X||_1 := {\rm Tr}[\sqrt{X^\dag X}]$ denotes the trace norm, and $\Lambda_{t,0}$ is the map describing the open dynamics of the system from 0 to $t$. The BLP criterion~\eqref{eqapp:BLPc} means that the map $\Lambda_{t,0}$ is contractive for arbitrary pair of states. For a qubit whose open dynamics is described by the general master equation~\eqref{eq:ME}, it is shown in~\cite{Theret2025} that the BLP criterion \eqref{eqapp:BLPc} is equivalent at any time $t$ to, 
\bea
& & \gamma_1(t) + \gamma_2(t) \geq 0,\nn\\
& & \gamma_3(t) + \frac{\gamma_1(t) + \gamma_2(t)}{2} \geq 0.
\eea
This criterion is used to obtain the plot in Fig. \ref{figapp:BLP}.\\

\begin{figure}
\begin{center}
    \includegraphics[width=0.45\textwidth]{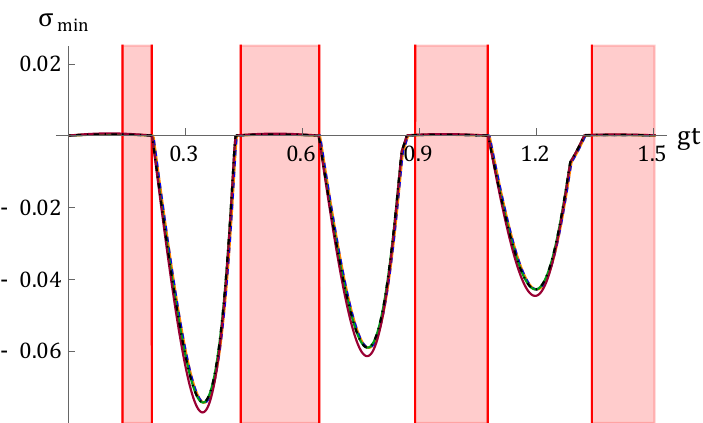}
\end{center}
\caption{Plot of $\sigma^{fp}_\text{min}$ and of the minimum of all entropy production definitions (defined  as in Eq. \eqref{sigmamin}), in the weak coupling regime with $\omega_B/\omega_A = 0.6$, $\Delta/\omega_A = 0.4$,  $\omega_A\beta_B = 0.3$, and $g/\omega_A =0.03 $. The shaded red areas correspond to the intervals of time where the dynamics is CP-divisible.}\label{figapp:CPdiv}
\end{figure}

\begin{figure}
\begin{center}
    \includegraphics[width=0.45\textwidth]{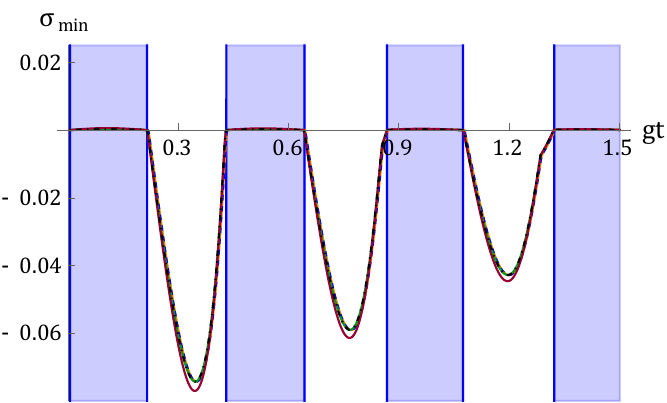}
\end{center}
\caption{Plot of $\sigma^{fp}_\text{min}$ and of the minimum of all entropy production definitions (defined  as in Eq. \eqref{sigmamin}), in the weak coupling regime with $\omega_B/\omega_A = 0.6$, $\Delta/\omega_A = 0.4$,  $\omega_A\beta_B = 0.3$, and $g/\omega_A =0.03 $. The shaded blue areas correspond to the intervals of time where there is no back flow of information according to the BLP criterion (see the text for details).}\label{figapp:BLP}
\end{figure}

\section{Analytical proof of $\sigma^\text{Es} = \sigma^\text{fp}$}\label{app:sigmaid}

From Eq.~\eqref{sigmaEs} and~\eqref{sigmafp}, one can see that we have to compare  $\beta_B^\text{eff}(0) \dot E_B$ and ${\rm Tr}[\dot \rho_A(t)\ln\rho_A^{fp}(t)]$.

\subsection{Explicit expression of $\beta_B^\text{eff}(0)\dot E_B$}
We can re-write $\beta_B^\text{eff}(0)\dot E_B$ as
\bea
\beta_B^\text{eff}(0) \dot E_B &=& \beta_B^\text{eff}(0) {\rm Tr}[\dot \rho_{AB}(t) H_B] \nn\\
&=& -i \beta_B^\text{eff}(0) {\rm Tr}\{[H_{AB}, \rho_{AB}(t) ]H_B\} \nn\\
&=& -i \beta_B^\text{eff}(0) {\rm Tr}\{\rho_{AB}(t) [H_{B}, H_{AB}]\} \nn\\
&=& - i \beta_B^\text{eff}(0) g \omega_B {\rm Tr}[\rho_{AB}(t)(\sigma_-a^\dag - \sigma_+a)]\nn\\
&=& - i \beta_B^\text{eff}(0) g \omega_B \nn\\
&&\times{\rm Tr}[\rho_{AB}(0)U^{\text{Sch}\dag}(t)(\sigma_-a^\dag - \sigma_+a)U^\text{Sch}(t)]\nn\\
&=& - i \beta_B^\text{eff}(0) g \omega_B \nn\\
&&\times{\rm Tr}[\rho_{AB}(0)U^{\dag}(t)(e^{-i t \Delta}\sigma_-a^\dag -e^{i t \Delta} \sigma_+a)U(t)]\nn\\
\eea
with $\Delta := \omega_A - \omega_B$, and $U(t)$ is the join evolution, in the interaction picture, given by~\cite{Smirne2010},
\bea\label{eq:U}
U(t) &=& c(\hat n +1) |1\ket\bra 1| + d(\hat n +1)a |1\ket\bra 0| \nn\\
&& - a^\dag d^\dag(\hat n +1) |0\ket\bra 1| + c^\dag(\hat n) |0\ket\bra 0|,
\eea
with 
\bea\label{eq:c}
c(\hat n) &=& e^{i t\Delta/2} \Big[\cos\left(\sqrt{\Delta^2 + 4g^2\hat n }\frac{t}{2}\right) \nn\\
&&- i \Delta \frac{\sin\left(\sqrt{\Delta^2 + 4g^2\hat n }\frac{t}{2} \right)}{\sqrt{\Delta^2 + 4g^2\hat n } }\Big]
\eea
and 
\be\label{eq:d}
d(\hat n) = - 2i g e^{i t\Delta/2} \frac{\sin\left(\sqrt{\Delta^2 + 4g^2\hat n }\frac{t}{2} \right)}{\sqrt{\Delta^2 + 4g^2\hat n } },
\ee
and $\hat n:= a^\dag a$. Assuming that $\rho_{SB}(0) = \rho_A(0)\otimes\rho_B(0)$, one obtains
\begin{widetext}
\bea
\beta_B^\text{eff}(0) \dot E_B &=& -i g\omega_B\beta_B^\text{eff}(0) \Bigg\{ p_1(0) {\rm Tr}\Big[ \rho_B(0) e^{it\Delta} \Big(c^\dag(\hat n +1)a a^\dag d^\dag(\hat n +1) + a^\dag d^\dag(\hat n+1)a c^\dag(\hat n)\Big) - \text{h.c.}\Big]\nn\\
&&\hspace{2cm} -{\rm Tr}\Big[\rho_B(0)e^{i t \Delta}\Big(a^\dag d^\dag(\hat n +1) a c^\dag(\hat n)\Big)  - \text{h.c.}\Big] \nn\\
&& \hspace{2cm}+ c_{10}(0) {\rm Tr}\Big[ \rho_B(0)  \Big(e^{-it\Delta}c(\hat n )a^\dag c(\hat n +1) +e^{it\Delta} a^\dag d^\dag(\hat n+1)a a^\dag d^\dag(\hat n+1)\Big)\Big]- \text{h.c.}\Bigg\}.
\eea
\end{widetext}
We describe each term in the above expression.
\begin{itemize}
\item Starting with the term $ {\rm Tr}\Big[ \rho_B(0) e^{it\Delta} \Big(c^\dag(\hat n +1)a a^\dag d^\dag(\hat n +1) + a^\dag d^\dag(\hat n+1)a c^\dag(\hat n)\Big) - \text{h.c.}\Big]$:\\
We assume that the initial state of $B$ is diagonal, $\rho_B(0) = \sum_n p_n |n\ket\bra n|$, where $\{|n\ket\}_n$ is the Fock state basis, and $\{p_n\}_n$ is a probability distribution left unspecified for now. We also use the following identity (which can be directly shown from Eq.~\eqref{eq:c} and~\eqref{eq:d}), $c^\dag(\hat n)a^\dag d^\dag(\hat n+1) = b^\dag d^\dag(\hat n+1)c^\dag(\hat n+1)$. Finally, we can show: 
\begin{widetext}
\bea
 &&{\rm Tr}\Big[ \rho_B(0) e^{it\Delta} \Big(c^\dag(\hat n +1)a a^\dag d^\dag(\hat n +1) + a^\dag d^\dag(\hat n+1)a c^\dag(\hat n)\Big) - \text{h.c.}\Big] \nn\\
&&={\rm Tr}\Big[ \rho_B(0) e^{it\Delta} \Big(c^\dag(\hat n +1)a a^\dag d^\dag(\hat n +1) + c^\dag(\hat n)a^\dag d^\dag(\hat n+1)a \Big) - \text{h.c.}\Big]\nn\\
&&={\rm Tr}\Big[ \rho_B(0) e^{it\Delta} \Big(c^\dag(\hat n +1)a a^\dag d^\dag(\hat n +1) + a^\dag d^\dag(\hat n +1) c^\dag(\hat n+1)a \Big) - \text{h.c.}\Big]\nn\\
&&= \sum_{n=0}^\infty p_n e^{it\Delta} \Big[c^*(n+1)(n+1)d^*(n+1) + nd^*(n)c^*(n)\Big] - \text{c.c}\nn\\
&&=2 i \sum_{n=0}^\infty p_n \Big[ (n+1){\rm Im}\Big(e^{it\Delta}c^*(n+1)d^*(n+1)\Big) + n {\rm Im}\Big(e^{it\Delta}c^*(n)d^*(n)\Big)\Big] \nn\\
&&=2 i \sum_{n=1}^\infty n(p_n + p_{n-1})  {\rm Im}\Big(e^{it\Delta}c^*(n)d^*(n)\Big) \nn\\
&&=-2 i \sum_{n=1}^\infty n(p_n + p_{n-1})  {\rm Im}\Big(e^{-it\Delta}c(n)d(n)\Big) \nn\\
&&= -2 i \sum_{n=1}^\infty n(p_n + p_{n-1})    {\rm Im}\Big[-2ig\Big(\cos \Omega_n t/2 - i \frac{ \Delta}{\Omega_n} \sin \Omega_n t/2\Big)\frac{\sin \Omega_n t/2}{\Omega_n}\Big]\nn\\
&&= 4 ig \sum_{n=1}^\infty n(p_n + p_{n-1})  \frac{1}{\Omega_n} \cos \frac{\Omega_n t}{2} \sin \frac{\Omega_n t}{2},
\eea
\end{widetext}
where $c(n)$ and $d(n)$ simply mean that we substitute in Eq.~\eqref{eq:c} and~\eqref{eq:d} the operator $\hat n$ by the integer $n$, and we introduce the frequency $\Omega_n := \sqrt{\Delta^2 + 4g^2n}$.

\item Next, the term $ -{\rm Tr}\Big[\rho_B(0)e^{i t \Delta}\Big(a^\dag d^\dag(\hat n +1) a c^\dag(\hat n)\Big)  - \text{h.c.}\Big]$:\\
Using the same identity as above, we have:
\bea
&-& {\rm Tr}\Big[\rho_B(0)e^{i t \Delta}\Big(a^\dag d^\dag(\hat n +1) a c^\dag(\hat n)\Big)  - \text{h.c.}\Big]\nn\\ &=& -e^{i t \Delta}{\rm Tr}\Big[\rho_B(0) a^\dag d^\dag(\hat n +1) c^\dag(\hat n+1)a \Big] + \text{c.c.}\nn\\
&=& -2i \sum_{n=0}^\infty p_n {\rm Im}\Big(e^{it\Delta} n d^*(n)c^*(n)\Big)\nn\\
&=& 2i \sum_{n=0}^\infty np_n {\rm Im}\Big(e^{-it\Delta}  d(n)c(n)\Big)\nn\\
&=& -4ig \sum_{n=1}^\infty np_n   \frac{1}{\Omega_n} \cos \frac{\Omega_n t}{2} \sin \frac{\Omega_n t}{2}
\eea

\item Finally, the last term ${\rm Tr}\Big[ \rho_B(0)  \Big(e^{-it\Delta}c(\hat n )a^\dag c(\hat n +1) +e^{it\Delta} a^\dag d^\dag(\hat n+1)a a^\dag d^\dag(\hat n+1)\Big)\Big]- \text{h.c.}$ is equal to zero for initial state diagonal state of $B$.
\end{itemize}

Altogether, we find
\begin{widetext}
\bea
\beta_B^\text{eff}(0) \dot E_B &=& 4 g^2\omega_B\beta_B^\text{eff}(0) \sum_{n=1}^\infty \Big[p_1(0) n(p_n+p_{n-1}) - n p_n\Big]   \frac{1}{\Omega_n} \cos \frac{\Omega_n t}{2} \sin \frac{\Omega_n t}{2}\nn\\
&=& \omega_B\beta_B^\text{eff}(0) \sum_{n=1}^\infty \Big[p_1(0) (p_n+p_{n-1}) -  p_n\Big]   \frac{\Omega_n^2 -\Delta^2}{\Omega_n} \cos \frac{\Omega_n t}{2} \sin \frac{\Omega_n t}{2}.
\eea
\end{widetext}
We recall that the above expression is valid for any initial diagonal state $\rho_B(0)$.

\subsection{Explicit expression of ${\rm Tr}[\dot\rho_A(t)\ln \rho_A^{fp}(t)]$}\label{subsec:identity}
We now assume that $B$ is initially in a thermal state $\rho_B(0) = Z^{-1}e^{- H_B/k_BT_B}$. From the structure of the exact master equation~\eqref{eq:ME} followed by $\rho_A(t)$, we can easily deduce that the instantaneous fixed point is
\be
\rho_A^{fp}(t) = p_1^{fp}(t) |1\ket\bra 1| + p_0^{fp}(t)|0\ket\bra 0|,
\ee
 with $p_1^{fp}(t) = \frac{\gamma_1(t)}{\gamma_1(t) + \gamma_2(t)}$ and  $p_0^{fp}(t) = \frac{\gamma_2(t)}{\gamma_1(t) + \gamma_2(t)}$. Then, we have 
\bea
{\rm Tr}[\dot\rho_A(t)\ln \rho_A^{fp}(t)] &=& \dot p_1(t) \ln \frac{\gamma_1(t)}{\gamma_2(t)}.
\eea
From the master equation~\eqref{eq:ME}, one obtains $\dot p_1(t) =  - (\gamma_1(t) + \gamma_2(t))p_1(t) + \gamma_1(t)$. Integrating the differential equation and using Eq.~\eqref{gamma1} and~\eqref{gamma2} of $\gamma_1(t)$ and $\gamma_2(t)$, we obtain $p_1(t) = [\alpha(t) + \beta(t) -1)p_1(0) + 1-\alpha(t)$. One can also show the identity $\gamma_1(t) + \gamma_2(t) = -\frac{\dot\alpha(t) + \dot\beta(t)}{\alpha(t) +\beta(t) -1}$, leading to
\bea
\dot p_1(t) &=& \Big[(\dot \alpha(t) +\dot \beta(t))p_1(0) - \dot\alpha(t)\Big]
\eea
and
\bea
\ln \frac{\gamma_1(t)}{\gamma_2(t)}&=&  \ln \left( \frac{\alpha(t)\dot\beta(t) -\beta(t)\dot\alpha(t) - \dot\beta(t)}{-\alpha(t)\dot\beta(t) +\beta(t)\dot\alpha(t) - \dot\alpha(t)}\right).
\eea
Then, from Eq.~\eqref{eq:alpha} and~\eqref{eq:beta}, we can show:
\bea\label{eq:da}
&&\dot \alpha(t) = \sum_{n=1}^\infty p_n\frac{\Delta^2-\Omega_n^2}{\Omega_n} \cos \frac{\Omega_n t}{2} \sin \frac{\Omega_n t}{2}\\\label{eq:db}
&&\dot \beta(t) = \sum_{n=1}^\infty p_{n-1}\frac{\Delta^2-\Omega_n^2}{\Omega_n} \cos \frac{\Omega_n t}{2} \sin \frac{\Omega_n t}{2},
\eea
from which follows the identities:
\bea
&&\dot \alpha(t) = e^{-\omega_B\beta_B}\dot\beta(t)\nn\\
&&\alpha(t)\dot\beta(t) - \dot\alpha(t)\beta(t)  = Z^{-1} \dot\beta(t)
\eea
with $Z =1/(1-e^{-\omega_B\beta_B})$  and the notation $\beta_B = 1/k_BT_B$. This implies 
\be\label{eq:identities}
\ln \frac{\gamma_1(t)}{\gamma_2(t)} = \ln \left( \frac{\alpha(t)\dot\beta(t) -\beta(t)\dot\alpha(t) - \dot\beta(t)}{-\alpha(t)\dot\beta(t) +\beta(t)\dot\alpha(t) - \dot\alpha(t)}\right) =- \omega_B\beta_B,
\ee
leading to 
\be
{\rm Tr}[\dot\rho_A(t)\ln \rho_A^{fp}(t)] =- \omega_B\beta_B \Big[(\dot \alpha(t) +\dot \beta(t))p_1(0) - \dot\alpha(t)\Big] .
\ee
Using Eq.~\eqref{eq:da} and \eqref{eq:db}, we finally arrive at
\bea
{\rm Tr}[\dot\rho_A(t)\ln \rho_A^{fp}(t)] &=&\omega_B\beta_B\sum_{n=1}^\infty \Big[p_1(0)(p_n+p_{n-1}) -p_n\Big]\nn\\
&&\times \frac{\Omega_n^2-\Delta^2}{\Omega_n}  \cos \frac{\Omega_n t}{2} \sin \frac{\Omega_n t}{2},
\eea
which is precisely the expression found above for $\beta_B\dot E_B$.

In conclusion, for $B$ initially in a thermal state, we indeed have $\sigma^{Es} = \sigma^{fp}$. Note that the relation~\eqref{eq:identities} is only valid for $B$ initially in a thermal state, so the hypothesis of initial thermality of $B$ is necessary to have the equality $\sigma^{Es} = \sigma^{fp}$ (but $A$ can be initially in arbitrary state).


\subsection{Important consequence}\label{app:impcons}
Equation \eqref{eq:identities} reveals that $\gamma_+(t)/\gamma_-(t)$ is indeed constant, which implies that the instantaneous fixed point $\rho_A^{fp}(t)$ is also constant and actually equal to 
\be
\rho_A^{fp} = w\left[H_A,\frac{\omega_B}{\omega_A}\beta_B\right].
\ee
 It is worth mentioning that this result is valid only when $B$ is initially in a thermal state (at inverse temperature $\beta_B$).

\section{P-divisibility implies positivity of $\sigma^{fp}$}\label{app:posep}
The entropy production based on the instantaneous fixed point $ \rho_A^{fp}(t)$ is defined as Eq.~\eqref{sigmafp},
\be
\sigma^{fp} := -{\rm Tr}\{\dot \rho_A(t) [\ln\rho_A(t) - \ln \rho_A^{fp}(t)]\}.
\ee
Note that when $B$ is initially in a thermal state, $ \rho_A^{fp}(t)$ is actually time-independent (see Sec.~\ref{app:impcons}). Introducing $\Lambda(t,0)$, the map describing the evolution of $A$ from 0 to $t$, we can rewrite the above expression as
\bea
\sigma^{fp} &:=&-\lim_{dt \rightarrow 0} \frac{1}{dt}\Big\{ {\rm Tr}\{\Lambda(t+dt,0)\rho_A(0) \Big[\ln\Lambda(t+dt,0)\rho_A(0)\nn\\
&& - \ln \rho_A^{fp}(t)\Big]  - {\rm Tr}\{\rho_A(t) \Big[\ln\rho_A(t) - \ln \rho_A^{fp}(t)\Big]  \Big\},
\eea
 Then, assuming $\Lambda(t,0)$ is P-divisible at time $t+dt$, we have,
\bea
\sigma^{fp} &:=&-\lim_{dt \rightarrow 0} \frac{1}{dt}\Big\{ {\rm Tr}\{\Lambda(t+dt,t)\rho_A(t) \Big[\ln\Lambda(t+dt,t)\rho_A(t)\nn\\
&& - \ln \Lambda(t+dt,t)\rho_A^{fp}(t)\Big]  - {\rm Tr}\{\rho_A(t) \Big[\ln\rho_A(t)\nn\\
&& - \ln \rho_A^{fp}(t)\Big]  \Big\} \geq 0,
\eea
which is always positive by using the contractivity property of the relative entropy under positive maps~\cite{Muller2017}. Importantly, in the above equation, we have used the fact that $\rho_A^{fp}(t)$ is the instantaneous fixed point at time $t$ so that $\Lambda(t+dt,t)\rho_A^{fp}(t) =\rho_A^{fp}(t)$. 

The above proof can be straightforwardly adapted to the fixed point entropy production extended to the Bloch ball, $\tilde\sigma^{fp}[\rho_A]:= -  {\rm Tr} \left\{{\cal L}_t \rho_A\left[ \ln \rho_A - \ln \rho_A^{fp}(t)\right] \right\}$. This implies that the map entropy production $\sigma_\text{map}$ is positive for the instant of times for which the map is P-divisible.

\bibliography{biblio}

\end{document}